  \providecommand\BibTeX{{%
    \normalfont B\kern-0.5em{\scshape i\kern-0.25em b}\kern-0.8em\TeX}}}
\begin{document}
\title{Learning-`N-Flying: A Learning-based, Decentralized Mission Aware UAS Collision Avoidance Scheme}
\author{Al\"ena Rodionova}
\email{alena.rodionova@seas.upenn.edu}
\orcid{0000-0001-8455-9917}
\affiliation{%
  \institution{University of Pennsylvania}
  \department{Department of Electrical and Systems Engineering}
  \city{Philadelphia}
  \state{PA}
  \postcode{19104}
  \country{USA}
}
\author{Yash Vardhan Pant}
\email{yashpant@berkeley.edu}
\affiliation{%
  \institution{University of California, Berkeley}
  \department{Department of Electrical Engineering and Computer Sciences}
  \city{Berkeley}
  \state{CA}
  \country{USA}
}

\author{Connor Kurtz}
\email{kurtzco@oregonstate.edu}
\affiliation{%
	\institution{Oregon State University}
	\department{School of Electrical Engineering and Computer Science}
	\city{Corvallis}
	\state{OR}
	\country{USA}
}

\author{Kuk Jang}
\email{jangkj@seas.upenn.edu}
\affiliation{%
  \institution{University of Pennsylvania}
  \department{Department of Electrical and Systems Engineering}
  \city{Philadelphia}
  \state{PA}
  \postcode{19104}
  \country{USA}
}

\author{Houssam Abbas}
\email{houssam.abbas@oregonstate.edu}
\affiliation{%
	\institution{Oregon State University}
	\department{School of Electrical Engineering and Computer Science}
	\city{Corvallis}
	\state{OR}
	\country{USA}
}

\author{Rahul Mangharam}
\email{rahulm@seas.upenn.edu}
\affiliation{%
	\institution{University of Pennsylvania}
	\department{Department of Electrical and Systems Engineering}
	\city{Philadelphia}
	\state{PA}
	\postcode{19104}
	\country{USA}
}


\begin{abstract}
Urban Air Mobility, the scenario where hundreds of manned and \ac{UAS} carry out a wide variety of missions (e.g. moving humans and goods within the city), is gaining acceptance as a transportation solution of the future. One of the key requirements for this to happen is safely managing the air traffic in these urban airspaces. Due to the expected density of the airspace, this requires fast autonomous solutions that can be deployed online. We propose Learning-`N-Flying (LNF) a multi-UAS Collision Avoidance (CA) framework. It is decentralized, works on-the-fly and allows autonomous \ac{UAS} managed by different operators to safely carry out complex missions, represented using Signal Temporal Logic, in a shared airspace. We initially formulate the problem of predictive collision avoidance for two \ac{UAS} as a mixed-integer linear program, and show that it is intractable to solve online. Instead, we first develop Learning-to-Fly (L2F) by combining: a) learning-based decision-making, and b) decentralized convex optimization-based control. LNF extends L2F to cases where there are more than two \ac{UAS} on a collision path. Through extensive simulations, we show that our method can run online (computation time in the order of milliseconds), and under certain assumptions has failure rates of less than $1\%$ in the worst-case, improving to near $0\%$ in more relaxed operations. We show the applicability of our scheme to a wide variety of settings through multiple case studies.	
\end{abstract}

\begin{CCSXML}
	<ccs2012>
	<concept>
	<concept_id>10003752.10003790.10003793</concept_id>
	<concept_desc>Theory of computation~Modal and temporal logics</concept_desc>
	<concept_significance>300</concept_significance>
	</concept>
	<concept>
	<concept_id>10003752.10003790.10002990</concept_id>
	<concept_desc>Theory of computation~Logic and verification</concept_desc>
	<concept_significance>300</concept_significance>
	</concept>
	<concept>
	<concept_id>10002978.10002986.10002990</concept_id>
	<concept_desc>Security and privacy~Logic and verification</concept_desc>
	<concept_significance>300</concept_significance>
	</concept>
	<concept>
	<concept_id>10010520.10010553.10010554.10010556</concept_id>
	<concept_desc>Computer systems organization~Robotic control</concept_desc>
	<concept_significance>500</concept_significance>
	</concept>
	<concept>
	<concept_id>10010520.10010570.10010571</concept_id>
	<concept_desc>Computer systems organization~Real-time operating systems</concept_desc>
	<concept_significance>500</concept_significance>
	</concept>
	<concept>
	<concept_id>10010147.10010257.10010293.10010294</concept_id>
	<concept_desc>Computing methodologies~Neural networks</concept_desc>
	<concept_significance>500</concept_significance>
	</concept>
	</ccs2012>
\end{CCSXML}

\ccsdesc[500]{Computer systems organization~Robotic control}
\ccsdesc[500]{Computing methodologies~Neural networks}

\keywords{Collision avoidance, unmanned aircraft systems, temporal logic, robustness, neural network, Model Predictive Control}

\maketitle
\section{Introduction}
\label{sec:intro}

With the increasing footprint and density of metropolitan cities, there is a need for new transportation solutions that can move goods and people around rapidly and without further stressing road networks. \acf{UAM} \cite{NASA2018UAM} is a one such concept quickly gaining acceptance \cite{Booz2018} as a means to improve connectivity in metropolitan cities. In such a scenario, hundreds of Autonomous manned and \acf{UAS} will carry goods and people around the city, while also performing a host of other missions. A critical step towards making this a reality is safe traffic management of the all the \ac{UAS} in the airspace. Given the high expected UAS traffic density, as well as the short timescales of the flights, \acf{UTM} needs to be autonomous, and guarantee a high degree of safety, and graceful degradation in cases of overload. 
The first requirement for automated UTM is that its algorithms be able to accommodate a wide variety of missions, since the different operators have different goals and constraints. 
The second requirement is that as the number of UAS in the airspace increases, the runtimes of the UTM algorithms does not blow up - at least up to a point.
Thirdly, it must provide guaranteed collision avoidance in most use cases, and degrade gracefully otherwise; that is, the determination of whether it will be able to deconflict two UAS or not must happen sufficiently fast to alert a higher-level algorithm or a human operator, say, who can impose additional constraints. 

In this paper we introduce and demonstrate a new algorithm, LNF, for multi-UAS planning in urban airspace. 
LNF starts from multi-UAS missions expressed in Signal Temporal Logic (STL), a formal behavioral specification language that can express a wide variety of missions and supports automated reasoning.
In general, a mission will couple various UAS together through mutual separation constraints, and this coupling can cause an exponential blowup in computation.
To avoid this, LNF lets every UAS plan independently of others, while ignoring the mutual separation constraints.
This independent planning step is performed using Fly-by-Logic, our previous UAS motion planner.
An online collision avoidance procedure then handles potential collisions on an as-needed basis, i.e. when two UAS that are within communication range detect a future collision between their pre-planned trajectories. 
Even online optimal collision avoidance between two UAS requires solving a Mixed-Integer Linear Program (MILP). 
LNF avoids this by using a recurrent neural network which maps the current configuration of the two UAS to a sequence of discrete decisions. The network's inference step runs much faster (and its runtime is much more stable) than running a MILP solver.
The network is trained offline on solutions generated by solving the MILP. 
To generalize from two UAS collision avoidance to multi-UAS, we introduce another component to LNF: Fly-by-Logic generates trajectories that satisfy their STL missions, and a robustness tube around each trajectory. 
As long as the UAS is within its tube, it satisfies its mission.
To handle a collision between 3 or more UAS, LNF shrinks the robustness tube for each trajectory in such a way that sequential 2-UAS collision avoidance succeeds in deconflicting all the UAS.

We show that LNF is capable of successfully resolving collisions between UAS even within high-density airspaces and the short timescales, which are exactly the scenarios expected in UAM. LNF creates opportunities for safer UAS operations and therefore safer UAM.

\paragraph{Contributions of this work}
In this paper, we present an online, decentralized and mission-aware UAS \acf{CA} scheme that combines machine learning-based decision-making with Model Predictive Control (MPC). 
The main contributions of our approach are:
\begin{enumerate}
	\item It systematically combines machine learning-based decision-making\footnote{With the offline training and fast online application of the learned policy, see Sections~\ref{sec:learning_supervised} and \ref{sec:experiments_l2f}.} with an 
	MPC-based \ac{CA} controller. This allows us to decouple the usually hard-to-interpret machine learning component and the safety-critical low-level controller, and also \textit{repair} potentially unsafe decisions by the ML components. We also present a sufficient condition for our scheme to successfully perform \ac{CA}. 
	
	\item LNF collision avoidance avoids the live-lock condition where pair-wise \ac{CA} continually results in the creation of collisions between other pairs of \ac{UAS}.
	
	\item Our formulation is \textit{mission-aware}, i.e. \ac{CA} does not result in violation of the UAS mission. As shown in \cite{itsc20}, this also enables faster STL-based mission planning for a certain class of STL specifications. 
	
	\item Our approach is computationally lightweight with a computation time of the order of $10ms$ and can be used online.
	
	\item Through extensive simulations, we show that the worst-case failure rate of our method is less than $1\%$, which is a significant improvement over other approaches including \cite{itsc20}.
\end{enumerate}

\begin{figure}[t]
	\centering
	\includegraphics[width=0.45\textwidth]{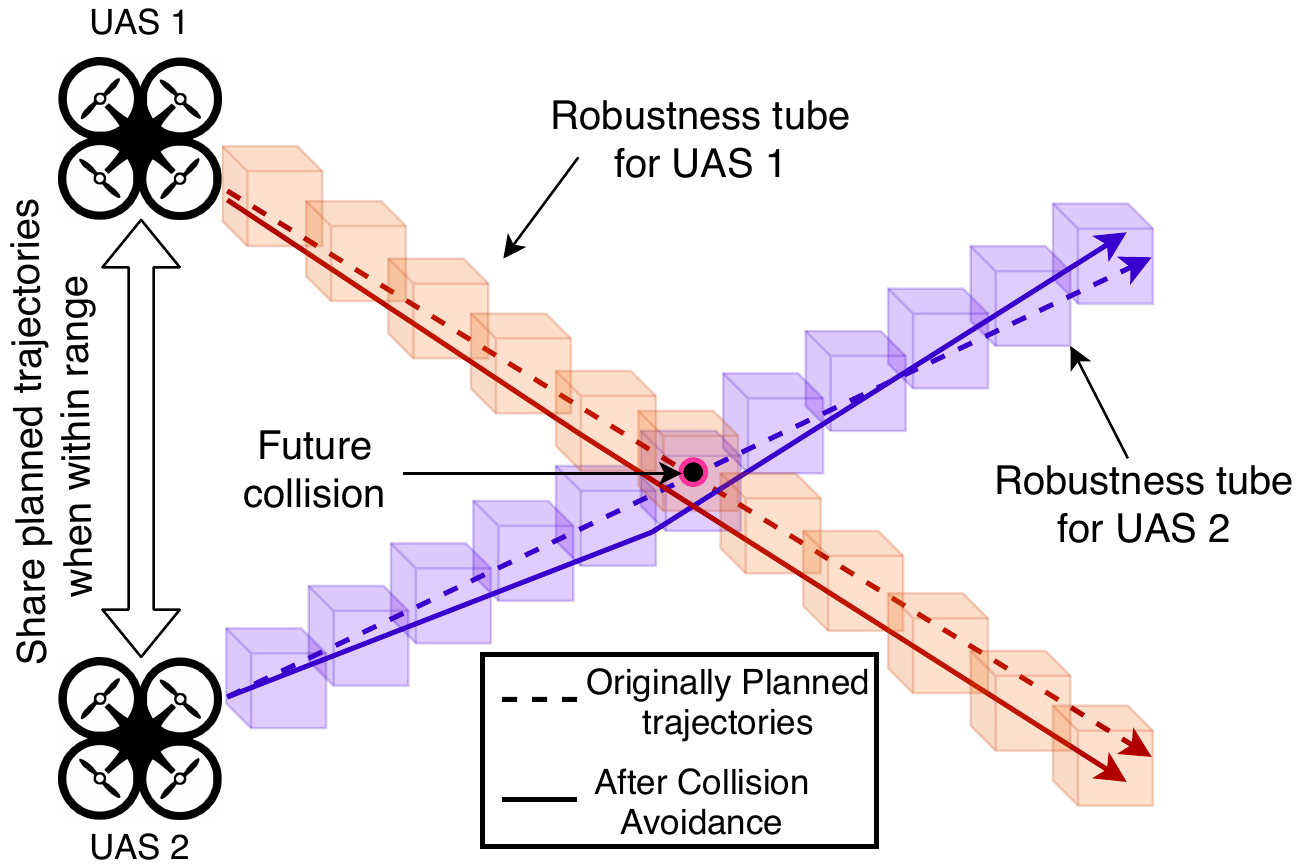}
	\vspace{-10pt}
	\caption{Two UAS communicating their planned trajectories, and cooperatively maneuvering within their \textit{robustness tubes} to avoid a potential collision in the future.}
	\label{fig:CAdiagram}
\end{figure}

\paragraph{Related Work.}
\textbf{UTM and Automatic Collision Avoidance approaches}
Collision avoidance (CA) is a critical component of UAS Traffic Management (UTM). The NASA/FAA Concept of Operations \cite{FAA2018UTM} and \cite{maxetal} present airspace allocation schemes where UAS are allocated airspace in the form of non-overlapping space-time polygons. Our approach is less restrictive and allows overlaps in the polygons, but performs online collision avoidance on an as-needed basis. A tree search-based planning approach for UAS CA is explored in \cite{UTMTCL4}. The next-gen CA system for manned aircrafts, ACAS-X \cite{kochenderfer2012next} is a learning-based approach that provides vertical separation recommendations. ACAS-Xu \cite{ACASXu} relies on a look-up table to provide high-level recommendations to two UAS.
It restricts desired maneuvers for CA to the vertical axis for cooperative traffic, and the horizontal axis for uncooperative traffic. 
While we consider only the cooperative case in this work, our method does not restrict \ac{CA} maneuvers to any single axis of motion. 
Finally, in its current form, ACAS-Xu also does not take into account any higher-level mission objectives, unlike our approach. This excludes its application to low-level flights in urban settings. The work in \cite{fabra2019distributed} presents a decentralized, mission aware CA scheme, but requires time of the order of seconds for the UAS to communicate and safely plan around each other, whereas our approach has a computation times in milliseconds.

\textbf{Multi-agent planning with temporal logic objectives}
Multi-agent planning for systems with temporal logic objectives has been well studied as a way of safe mission planning. Approaches for this usually rely on grid-based discretization of the workspace \cite{SahaRSJ14, DeCastro17}, or a simplified abstraction of the dynamics of the agents \cite{Drona,AksarayCDC16}.
\cite{MaICUAS16} combines a discrete planner with a continuous trajectory generator. 
Some methods \cite{4459804, 1582935, 1641832} work for subsets of Linear Temporal Logic (LTL) that do not allow for explicit timing bounds on the mission requirements.
The work in \cite{SahaRSJ14} allows some explicit timing constraints. However, it restricts motion to a discrete set of motion primitives. The predictive control method of \cite{Raman14_MPCSTL} uses the full STL grammar; it handles a continuous workspace and linear dynamics of robots, however its reliance on mixed-integer encoding (similar to \cite{Saha_acc16,KaramanF11_LTLrouting}) limit its practical use as seen in \cite{pant2017smooth}. The approach of \cite{pant2018fly} instead relies on optimizing a smooth (non-convex) function for generating trajectories for fleets of multi-rotor UAS with STL specifications. While these methods can ensure safe operation of multi-agent systems, these are all centralized approaches, i.e. require joint planning for all agents and do not scale well with the number of agents. In our framework, we use the planning method of \cite{pant2018fly}, but we let each UAS plan independently of each other in order for the planning to scale. We ensure the safe operation of all UAS in the airspace through the use of our predictive collision avoidance scheme.

\paragraph{Organization of the paper}
The rest of the paper is organized as follows.
Section~\ref{sec:preliminaries} covers preliminaries on Signal Temporal Logic and trajectory planning. 
In Section \ref{sec:CA} we formalize the two-\ac{UAS} \ac{CA} problems, state our main assumptions, and develop a baseline centralized solution via a MILP formulation. 
Section~\ref{sec:l2f} presents a decentralized learning-based collision avoidance framework for UAS pairs. 
In Section~\ref{sec:lnf} we extend this approach to support cases when \ac{CA} has to be performed for three or more UAS.
We evaluate our methods through extensive simulations, including three case studies in Section~\ref{sec:exp}.
Section~\ref{sec:conclusions} concludes the paper.

\section{Preliminaries: Signal Temporal Logic-based UAS planning}
\label{sec:preliminaries}

\paragraph{Notation.}
For a vector $x=(x_1,\ldots,x_m) \in \Re^m$, $\|x\|_\infty = \max_i |x_i|$.

\begin{figure}[t]
	\includegraphics[width=0.99\textwidth]{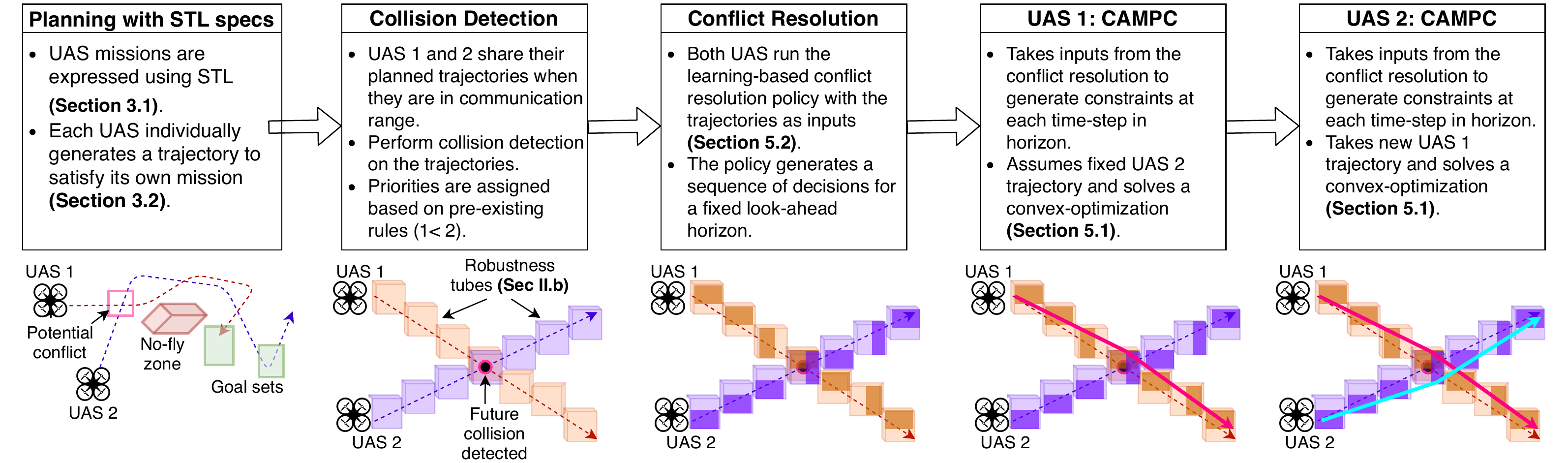}
	\vspace{-10pt}
	\caption{Step-wise explanation and visualization of the framework. Each UAS generates its own trajectories to satisfy a mission expressed as a Signal Temporal Logic (STL) specification, e.g. regions in green are regions of interest for the UAS to visit, and the no-fly zone corresponds to infrastructure that all the UAS must avoid. When executing these trajectories, UAS communicate their trajectories to others in range to detect any collisions that may happen in the near future. If a collision is detected, the two UAS execute a conflict resolution scheme that generates a set of additional constraints that the UAS must satisfy to avoid the collision. A co-operative CA-MPC controls the UAS to best satisfy these constraints while ensuring each UAS's STL specification is still satisfied. This results in new trajectories (in solid pink and blue) that will avoid the conflict and still stay within the predefined robustness tubes.}
	\label{fig:concept}
	\vspace{-10pt}
\end{figure}

\subsection{Introduction to Signal Temporal Logic and its Robustness}
\label{sec:STL_intro}
Let $\TDom = \{0,dt,2dt,3dt \ldots\}$ be a discrete time domain with sampling period $dt$ and let $\mathcal{X} \subset \Re^m$ be the state space.
A \textit{signal} is a function $\sstraj: E \rightarrow \mathcal{X}$ where $E \subseteq \TDom$; The $k^{\text{th}}$ element of $\sttraj$ is written $x_k$, $k\geq 0$. 
Let $\SigSpace$ be the set of all signals.

Signal specifications are expressed in Signal Temporal Logic (STL) \cite{MalerN2004STL}, of which we give an informal description here. 
An STL formula $\formula$ is created using the following grammar:
\[\formula  \defeq \top~|~p~|~\neg \formula ~|~ \formula_1 \lor \formula_2 ~|~ \eventually_{[a,b]} \formula ~|~ \always_{[a,b]} \formula ~|~ \formula_1 \until_{[a,b]} \formula_2\]
Here, $\top$ is logical True, $p$ is an atomic proposition, i.e. a basic statement about the state of the system, $\neg,\lor$ are the usual Boolean negation and disjunction, $\eventually$ is Eventually, $\always$ is Always and $\until$ is Until. 
It is possible to define the $\eventually$ and $\always$ in terms of Until $\until$, but we make them base operations because we will work extensively with them.

An STL specification $\varphi$ is interpreted over a signal, e.g. over the trajectories of quad-rotors, and evaluates to either \textit{True} or \textit{False}. 
For example, operator Eventually ($\eventually$) augmented with a time interval $\eventually_{[a,b]}\varphi$ states that $\varphi$ is $\textit{True}$ at some point within $[a,b]$ time units. Operator Always ($\always$) would correspond to $\varphi$ being $\textit{True}$ everywhere within time $[a,b]$.
The following example demonstrates how STL captures operational requirements for two UAS:
\begin{exmp}
\label{ex:reach_avoid_exmp}
\textit{(A two UAS timed reach-avoid problem)} 
Two quad-rotor UAS are tasked with a mission with spatial and temporal requirements in the workspace schematically shown in Figure~\ref{fig:concept}:

\begin{enumerate}

\item Each of the two UAS has to reach its corresponding $\text{Goal}$ set (shown in green) within a time of $6$ seconds after starting. 
UAS $j$ (where $j\in \{1,2\}$), with position denoted by $\mathbf{p}_j$, has to satisfy: $\varphi_{\text{reach}, j} = 
\eventually_{[0,6]} (\mathbf{p}_j \in \text{Goal}_j)$.
The \textit{Eventually} operator over the time interval $[0,6]$ requires UAS $j$ to be inside the set $\text{Goal}_j$ at some point within $6$ seconds. 

\item The two UAS also have an $\text{Unsafe}$ (in red) set to avoid, e.g. a no-fly zone. For each UAS $j$, this is encoded with \textit{Always} and \textit{Negation} operators:

$\varphi_{\text{avoid},j} = \always_{[0,6]} \neg (\mathbf{p}_j \in 
\text{Unsafe})$

\item Finally, the two UAS should be separated by at least $\delta$ meters along every axis of motion:

$\varphi_{\text{separation}} = \always_{[0,6]} ||\mathbf{p}_1 - \mathbf{p}_2||_{\infty} 
\geq \delta$

\end{enumerate}

The 2-UAS timed reach-avoid specification is thus:
\begin{equation}
\label{eq:timed_RA}
\varphi_{\text{reach-avoid}} = \bigwedge_{j=1}^2 ( \varphi_{\text{reach},j} \land 
\varphi_{\text{avoid},j}) \land \varphi_\text{separation}
\end{equation}
\end{exmp}

To satisfy $\varphi$ a planning method generates trajectories $\mathbf{p}_1$ and $\mathbf{p}_2$ of a duration at least $hrz(\varphi)= 6$s, where $hrz(\varphi)$ is the time \textit{horizon} of $\varphi$. 
If the trajectories satisfy the specification, i.e. $(\mathbf{p}_1,\, \mathbf{p}_2) \models \varphi$, then the specification $\varphi$ evaluates to \textit{True}, otherwise it is \textit{False}. 
In general, an upper bound for the time horizon can be computed as shown in \cite{Raman14_MPCSTL}. 
In this work, we consider specifications such that the horizon is bounded. More details on STL can be found in \cite{MalerN2004STL} or \cite{Raman14_MPCSTL}. 
In this paper, we consider discrete-time STL semantics which are defined over discrete-time trajectories.


%
%
%
%


The \textit{Robustness} value \cite{FainekosP09tcs} $\rho_\formula(\sttraj)$ of an STL formula $\formula$ with respect to the signal $\mathbf{x}$ is a real-valued function of $\mathbf{x}$ that has the important following property:

\begin{theorem} \cite{FainekosP09tcs}
	\label{thm:rob objective new}
	(i) For any $\sstraj \in \SigSpace$ and STL formula $\formula$, 
	if $\robf(\sstraj) <0$ then $\sstraj$ violates $\formula$, and if $\robf(\sstraj) > 0$ then $\sstraj$ satisfies $\formula$. 
	The case $\robf(\sstraj) =0$ is inconclusive.
	
	(ii) Given a discrete-time trajectory $\sstraj$ such that $\sstraj \models \formula$ with robustness value $\robf(\sttraj) = r>0$, then any trajectory $\mathbf{x}'$ that is within $r$ of $\sstraj$ at each time step, i.e. $||x_k-x'_k||_\infty < r, \, \forall k \in \mathbb{H}$, 
	is such that $\mathbf{x}' \models \formula$ (also satisfies $\formula$).
\end{theorem}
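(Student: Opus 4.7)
The plan is to prove both parts by structural induction on the STL formula $\formula$, exploiting the recursive definition of the quantitative semantics $\robf$. Recall that this semantics is built so that atomic propositions (of the form $f(x)\ge 0$ for some $1$-Lipschitz $f$) evaluate to $f(x)$; the Boolean connectives are interpreted by $\max$ and unary minus; and the temporal operators $\eventually$, $\always$, $\until$ are interpreted by $\sup$, $\inf$, and their nested combination over the specified time window.

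For part (i), I would first verify the base case: for an atomic proposition $p$ of the form $f(x)\ge 0$, the sign of $\robf=f(x)$ directly matches Boolean satisfaction of $p$, with the boundary $f(x)=0$ being the only inconclusive case. The inductive step then checks that each operator preserves the soundness property. For $\neg\formula$, the sign flip of $-\robf$ matches Boolean negation; for $\formula_1\lor\formula_2$, a positive max means at least one disjunct has positive robustness, and hence by the induction hypothesis is satisfied; for $\eventually_{[a,b]}\formula$ a positive supremum means some time in the window has positive robustness, which matches the existential quantifier in the Boolean semantics; and the dual argument handles $\always$. The $\until$ operator combines both existential and universal quantification and is handled by composing these two cases.

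For part (ii), my plan is to prove by structural induction the stronger \emph{non-expansivity} bound
\[
|\robf(\sstraj)-\robf(\mathbf{x}')|\;\le\;\sup_{k\in\mathbb{H}}\|x_k-x'_k\|_\infty,
\]
under the assumption that atomic predicates are $1$-Lipschitz in $\|\cdot\|_\infty$. The base case is exactly this Lipschitz hypothesis, and the inductive step uses the elementary fact that $\max$, $\min$, $\sup$, $\inf$, and $x\mapsto -x$ are all non-expansive, so the bound propagates through every STL connective. Once this bound is in hand, if $\robf(\sstraj)=r>0$ and $\|x_k-x'_k\|_\infty<r$ for every $k$, then $\robf(\mathbf{x}')>\robf(\sstraj)-r=0$, and part (i) immediately yields $\mathbf{x}'\models\formula$.

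The step I expect to be trickiest is the $\until$ case inside both inductions, since its robustness is a nested supremum-of-min-of-infimum taken over a variable-length prefix of the trajectory. My plan is to first dispatch the pure single-quantifier cases $\eventually$ and $\always$, and then derive the $\until$ case by composing the corresponding bounds and invoking the fact that compositions of non-expansive maps remain non-expansive. Careful bookkeeping of the discrete time indices, especially near the end of the finite horizon $\mathbb{H}$, is where most errors typically creep in, so I would be explicit about the discrete-time semantics throughout rather than rely on analogies with the continuous-time version from \cite{FainekosP09tcs}.
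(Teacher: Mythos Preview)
The paper does not prove this theorem at all: it is quoted verbatim from \cite{FainekosP09tcs} and used as a black box, so there is no ``paper's own proof'' to compare against. Your structural-induction outline is essentially the argument given in the original reference (and in the closely related Donz\'e--Maler formulation), and it is correct as sketched; the only caveat is that your part~(ii) argument relies on the extra hypothesis that atomic predicates are $1$-Lipschitz in $\|\cdot\|_\infty$, whereas the original Fainekos--Pappas result defines robustness as the signed distance to the satisfaction set and thereby obtains the tube property directly without that assumption.
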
 


\subsection{UAS planning with STL specifications}
\label{sec:problem_planning}

Fly-by-logic \cite{pant2017smooth,pant2018fly} generates trajectories by centrally planning for fleets of UAS with STL specifications, e.g. the specification $\varphi_{\textit{reach-avoid}}$ of example \ref{ex:reach_avoid_exmp}. 
It maximizes the robustness function by picking waypoints 
 for all UAS through a centralized, non-convex optimization.

While successful in planning for multiple multi-rotor UAS, performance degrades as the number of UAS increases, in particular because for $N$ UAS, $N \choose 2$ terms are needed for specifying the pair-wise separation constraint $\formula_{\textit{separation}}$.
For these reasons, the method cannot be used for real-time planning. 
In this work, we use the underlying optimization of  \cite{pant2018fly} to generate trajectories, but ignore the mutual separation requirement, allowing each UAS to independently (and in parallel) solve for their own STL specification. 
For the timed reach-avoid specification \eqref{eq:timed_RA} in example \ref{ex:reach_avoid_exmp}, this is equivalent to each UAS generating its own trajectory to satisfy $\varphi_j = \varphi_{\textit{reach}, j} \land \varphi_{\textit{avoid}, j}$, independently of the other UAS. 
Ignoring the collision avoidance requirement $\varphi_\textit{separation}$ in the planning stage allows for the specification of \eqref{eq:timed_RA} to be decoupled across UAS. 
Therefore, this approach requires \textit{online} UAS collision avoidance. This is covered in the following section.

\section{Problem formulation: Mission aware UAS Collision Avoidance} 
\label{sec:CA}



We consider the case where two UAS flying pre-planned trajectories are required to perform collision avoidance if their trajectories are on path for a \textit{conflict}.

\begin{mydef}
	\label{def:msep}
	\textbf{2-UAS Conflict}: Two UAS, with discrete-time positions $\mathbf{p}_1$ and $\mathbf{p}_2$ are said to be in \textit{conflict} at time step $k$ if $||p_{1,k}-p_{2,k}||_\infty < \delta$, where $\delta$ is a predefined minimum separation distance\footnote{A more general polyhedral constraint of the form $M(p_{1,k}-p_{2,k})< q$ can be used for defining the conflict.}. Here, $p_{j,k}$ represents the position of UAS $j$ at time step $k$. 
\end{mydef}

While flying their independently planned trajectories, two UAS that are within communication range share an $H$-step look-ahead of their trajectories and check for a potential conflict in those $H$ steps.
We assume the UAS can communicate with each other in a manner that allows for enough advance notice for avoiding collisions, e.g. using 5G technology. While the details of this are beyond the scope of this paper, we formalize this assumption as follows:

\begin{myass}
	\label{assumption0}
	The two UAS in conflict have a communication range that is at least greater than their $n$-step forward reachable set \cite{dahleh2004lectures} ($n\geq 1$) \footnote{This set can be computed offline as we know the dynamics and actuation limits for each UAS.}. That is, the two UAS will not collide immediately in at least the next $n$-time steps, enabling them to communicate with each other to avoid a collision. Here $n$ is potentially dependent on the communication technology being used.  	
\end{myass}


\begin{mydef}
	\label{def:robustnesstube}
	\textbf{Robustness tube}: Given an STL formula $\varphi$ and a discrete-time position trajectory $\mathbf{p}_j$ that satisfies $\varphi$ (with associated robustness $\rho$), the (discrete) \textit{robustness tube} around $\mathbf{p}_j$ is given by $\mathbf{P}_j = \mathbf{p}_j\oplus \mathbb{B}_{\rho}$, 
	where $\mathbb{B}_\rho$ is a 3D cube with sides $2\rho$ and $\oplus$ is the Minkowski sum operation ($A\oplus B \defeq \{a+b\such a\in A,b\in B\}$). We say the \textit{radius} of this tube is $\rho$ (in the inf-norm sense). 
\end{mydef}
Robustness tube defines the space around the UAS trajectory, such that as long as the UAS stays within its robustness tube, it will satisfy STL specification for which it was generated.
See examples of the robustness tubes in Figures~\ref{fig:CAdiagram} and \ref{fig:concept}. 

%


The following assumption relates the minimum allowable radius $\rho$ of the robustness tube to the minimum allowable separation $\delta$ between two UAS.

\begin{myass}
	\label{assumption1}
	For each of the two UAS in conflict, the radius of the robustness tube is greater than $\delta/2$, i.e. $\min (\rho_1,\rho_2) \geq \delta/2$ where $\rho_1$ and $\rho_2$ are the robustness of UAS 1 and 2, respectively.
\end{myass}
This assumption defines the case where the radius of the robustness tube is wide enough to have two UAS placed along opposing edges of their respective tubes and still achieve the minimum separation between them. We assume that all the trajectories generated by the independent planning have sufficient robustness to satisfy this assumption (see Sec. \ref{sec:problem_planning}).
Now we define the problem of collision avoidance with satisfaction of STL specifications:
\begin{myprob}
	\label{prob:deconfliction}
	Given two planned $H$-step UAS trajectories $\mathbf{p}_1$ and $\mathbf{p}_2$ that have a conflict, the collision avoidance problem is to find a new sequence of positions $\mathbf{p}_1'$ and $\mathbf{p}_2'$ that meet the following conditions:
	\begin{subequations}
		\begin{align}
		||p_{1,k}'-p_{2,k}'|| \geq \delta, \, &\forall k \in \{0,\dotsc,H\} \label{eq:msep}\\
		p_{j,k}' \in P_{j,k}, \, &\forall k \in \{0,\dotsc,H\},\,\forall j\in\{1,2\}. \label{eq:intube}
		\end{align}
	\end{subequations}
\end{myprob}
That is, we need a new trajectory for each UAS such that they achieve minimum separation distance and also stay within the robustness tube around their originally planned trajectories. 




\paragraph{Convex constraints for collision avoidance}

Let $z_k=p_{1,k} - p_{2,k} $ be the difference in UAS positions at time step $k$. For two UAS not to be in conflict, we need 
\begin{equation}
\label{eq:noconf}
z_k \not \in \mathbb{B}_{\delta/2},\ \forall k\in\{0,\ldots,H\},
\end{equation}
This is a non-convex constraint. For a computationally tractable controller formulation which solves Problem \ref{prob:deconfliction}, we define convex constraints that when satisfied imply Equation \eqref{eq:noconf}. 
The $3$D cube $\mathbb{B}_{\delta/2}$ can be defined by a set of linear inequality constraints of the form $\widetilde{M}^i z \leq \widetilde{q}^i, \, \forall i\in\{1,\ldots,6\}$. Equation~\eqref{eq:noconf} is satisfied when $\exists i \, | \widetilde{M}^i z > \widetilde{q}_i$. Let $M = -\widetilde{M}$ and $q = -\widetilde{q}$, then $\forall i \in \{1,\ldots,6\}$, 
\begin{equation}
\label{eq:pickaside}
M^i(p_{1,k}-p_{2,k}) < {q}^i \Rightarrow (p_{1,k}-p_{2,k}) \not \in \mathbb{B}_{\delta/2}
\end{equation}

Intuitively, picking one $i$ at time step $k$ results in a configuration (in position space) where the two UAS are separated in one of two ways along one of three axes of motion\footnote{Two ways along one of three axes defines $6$ options, $i\in\{1,\ldots,6\}$.}. For example, if at time step $k$ we select $i$ with corresponding $M^i=[0, 0, 1]$ and $q^i = -\delta$, it implies that UAS 2 flies over UAS 1 by $\delta$ meters, and so on.

\paragraph{A Centralized solution via a MILP formulation}

Here, we formulate a \ac{MILP} to solve the two UAS \ac{CA} problem of problem \ref{prob:deconfliction} in a predictive, receding horizon manner. For the formulation, we consider a $H$-step look ahead that contains the time steps where the two UAS are in conflict. Let the dynamics of either UAS\footnote{For simplicity we assume both UAS have identical dynamics associated with multi-rotor robots, however our approach would work otherwise.} be of the form $x_{k+1} = Ax_k + Bu_k$. 
At each time step $k$, the UAS state is defined as $x_k =[p_k,\,v_k]^T\in \mathbb{R}^6$, where $p$ and $v$ are the UAS positions and velocities in the 3D space. Let $C$ be the observation matrix such that $p_k=Cx_k$. The inputs $u_k \in \mathbb{R}^3$ are the thrust, roll and pitch of the UAS. The matrices $A$ and $B$ are obtained through linearization 
of the UAS dynamics around hover and discretization in time, see \cite{luukkonen2011modelling} and \cite{pant2015co} for
more details. 
Let $\mathbf{x}_j \in \mathbb{R}^{6(H+1)}$ be the pre-planned full state trajectories, $\mathbf{x}_j' \in \mathbb{R}^{6(H+1)}$ the new full state trajectories and $\mathbf{u}_j' \in \mathbb{R}^{3H}$ the new controls to be computed for the UAS $j=1,2$. 
Let $\mathbf{b} \in \{0,1\}^{6(H+1)}$ be binary decision variables, and $\mu$ is a large positive number, then the MILP problem is defined as:
\begin{equation}
\label{eq:CentralMILP}
	\begin{aligned}
	&\min_{\mathbf{u}_1', \, \mathbf{u}_2', \, \mathbf{b}} J(\mathbf{x}_1', \, \mathbf{u}_1', \, \mathbf{x}_2', \, \mathbf{u}_2') \\
	x_{j,0}' &= x_{j,0}, \, \forall j \in \{1,2\} \\
	x_{j,k+1}' &= Ax_{j,k}' + Bu_{j,k}', \, \forall k \in \{0,\dotsc,H-1\} , \, \forall j \in \{1,2\}\\
	Cx'_{j,k} &\in P_{j,k}, \, \forall k \in \{0,\dotsc,H\} , \, \forall j \in \{1,2\}\\
	M^{i}C\,(x_{1,k}'-x_{2,k}') &\leq {q}_{i} +\mu(1-b^i_{k}), \, \forall k \in \{0,\dotsc,H\}, \forall i \in \{1,\dotsc,6\} \\
	\sum_{i=1}^6 b^i_{k} &\geq 1, \, \forall k \in \{0,\dotsc,H\} \\
	u_{j,k}' &\in U, \, \forall k \in \{0,\dotsc,H\} , \, \forall j\in \{1,2\}\\
	x_{j,k}' &\in X, \, \forall k \in \{0,\dotsc,H+1\}, \forall j\in \{1,2\}.
	\end{aligned}
\end{equation}
Here $b^i_k$ encodes action $i=1,\dotsc,6$ taken for avoiding a collision at time step $k$ which corresponds to a particular side of the cube $\mathbb{B}_{\delta/2}$.
Function $J$ could be any cost function of interest, we use $J=0$ to turn \eqref{eq:CentralMILP} into a feasibility problem.
A solution (when it exists) to this MILP results in new trajectories ($\mathbf{p}_1', \, \mathbf{p}_2'$) that avoid collisions and stay within their respective robustness tubes of the original trajectories, and hence are a solution to problem \ref{prob:deconfliction}.

Such optimization is joint over both UAS. It is impractical as it would either require one UAS to solve for both or each UAS to solve an identical optimization that would also give information about the control sequence of the other UAS. Solving this MILP in an online manner is also intractable, as we shown in Section \ref{sec:exp_results}.


\section{Learning-2-Fly: Decentralized Collision Avoidance for UAS pairs}
\label{sec:l2f}

To solve problem \ref{prob:deconfliction} in an online and decentralized manner, we develop our framework, Learning-to-Fly (L2F). Given a predefined priority among the two UAS, this combines a learning-based \ac{CR} scheme (running aboard each UAS) that gives us the discrete components of the \ac{MILP} formulation \eqref{eq:CentralMILP}, and a co-operative collision avoidance MPC for each UAS to control them in a decentralized manner.
We assume that the two UAS can communicate their pre-planned $N$-step trajectories $\mathbf{p}_1,\, \mathbf{p}_2$ to each other (refer to Sec. \ref{sec:problem_planning}), and then L2F solves problem \ref{prob:deconfliction} by following these steps (also see  Algorithm~\ref{alg:l2f}) : 

\begin{enumerate}
	\item \textbf{Conflict resolution:} UAS 1 and 2 make a \textit{sequence of decisions}, $\mathbf{d}=(d_0,\ldots,d_H)$ to avoid collision. Each $d_k\in\{1,\ldots\,6\}$ 
	represents a particular choice of $M$ and $q$ at time step $k$, see eq.~\eqref{eq:pickaside}.
	Section \ref{sec:learning_supervised} will describe our proposed learning-based method for picking $d_k$.
	\item \textbf{UAS 1 CA-MPC:} UAS 1 takes the conflict resolution sequence $\mathbf{d}$ from step 1 and solves a convex optimization to try to deconflict while assuming UAS 2 maintains its original trajectory. After the optimization the new trajectory for UAS 1 is sent to UAS 2.
	\item \textbf{UAS 2 CA-MPC:} (If needed) UAS 2 takes the same conflict resolution sequence $\mathbf{d}$ from step 1 and solves a convex optimization to try to avoid UAS 1's new trajectory. 
	Section~\ref{sec:ca_mpc} provides more details on CA-MPC steps 2 and 3.
\end{enumerate}

The visualization of the above steps is presented in Figure~\ref{fig:concept}.
Such decentralized approach differs from the centralized MILP approach, where both the binary decision variables and continuous control variables for each UAS are decided concurrently. 

\vspace{-5pt}
\subsection{Distributed and co-operative Collision Avoidance MPC (CA-MPC)}
\label{sec:ca_mpc}

%
Let $\mathbf{x}_j$ be the pre-planned trajectory of UAS $j$, $\mathbf{x}_\textit{avoid}$ be the pre-planned trajectory of the other UAS to which $j$ must attain a minimum separation, 
and let $prty_j \in \{-1, +1\}$ be the priority of UAS $j$.
Assume a decision sequence $\mathbf{d}$ is given: at each $k$ in the collision avoidance horizon, the UAS are to avoid each other by respecting \eqref{eq:pickaside}, namely $M^{d_k}(p_{1,k}-p_{2,k}) < {q}^{d_k}$.
Then each UAS $j=1,2$ solves the following Collision-Avoidance MPC optimization (CA-MPC):
\textbf{$\text{CA-MPC}_j(\mathbf{x}_j,\, \mathbf{x}_{avoid},\,\mathbf{P}_j,\ \mathbf{d},\, prty_j)$}:
\vspace{-5pt}
\begin{equation}
\label{eq:campc}
\begin{aligned}
& \min_{\mathbf{u}_j',\boldsymbol{\lambda}_j} \sum_{k=0}^H \lambda_{j,k} \\
x_{j,0}' &= x_{j,0} \\
x_{j,k+1}' &= Ax_{j,k}' + Bu_{j,k}', \, \forall k\in\{0,\dotsc,H-1\} \\
Cx_{j,k}' &\in P_{j,k}, \, \forall k\in\{0,\dotsc,H\} \\
prty_j\cdot M^{d_k}C\,(x_{avoid,k}-x_{j,k}') &\leq q^{d_k} + \lambda_{j,k},\, \forall k\in\{0,\dotsc,H\} \\
\lambda_{j,k} &\geq 0,\, \forall k\in\{0,\dotsc,H\} \\
u_{j,k}' &\in U,\, \forall k\in\{0,\dotsc,H \} \\
x_{j,k}' &\in X, \, \forall k \in \{0,\dotsc,H+1\}.
\end{aligned}
\end{equation}

This MPC optimization tries to find a new trajectory $\mathbf{x}_j'$ for the UAS $j$ that minimizes the slack variables $\lambda_{j,k}$ that correspond to violations in the minimum separation constraint $\eqref{eq:pickaside}$ w.r.t the pre-planned trajectory $\mathbf{x}_\textit{avoid}$ of the UAS in conflict. 
The constraints in \eqref{eq:campc} ensure that UAS $j$ respects its dynamics, input constraints, and state constraints to stay inside the robustness tube. 
An objective of $0$ implies that UAS $j$'s new trajectory satisfies the minimum separation between the two UAS, see Equation~\eqref{eq:pickaside}\footnote{Enforcing the separation constraint at each time step can lead to a restrictive formulation, especially in cases where the two UAS are only briefly close to each other. This does however give us an optimization with a structure that does not change over time, and can avoid collisions in cases where the UAS could run across each other more than once in quick succession (e.g. \url{https://tinyurl.com/arc-case}), which is something ACAS-Xu was not designed for.}.

\textbf{CA-MPC optimization for UAS 1:}
UAS 1, with lower priority, $prty_1 = -1$, first attempts to resolve the conflict for the given sequence of decisions $\mathbf{d}$:
\begin{align}
(\mathbf{x_1'}, \mathbf{u}_1', \boldsymbol{\lambda}_1)&=\textbf{CA-MPC}_1(\mathbf{x}_1,\mathbf{x}_2, \mathbf{P}_1, \mathbf{d}, -1)
\label{eq:drone1mpc}
\end{align}
An objective of $0$ implies that UAS 1 alone can satisfy the minimum separation between the two UAS. 
Otherwise, UAS 1 alone could not create separation and UAS 2 now needs to maneuver as well.

\textbf{CA-MPC optimization for UAS 2:}
If UAS 1 is unsuccessful at collision avoidance, UAS 1 communicates its current revised trajectory $\mathbf{x}_1'$ to UAS 2, with $prty_2 = +1$.
UAS 2 then creates a new trajectory $\mathbf{x}_2'$ 
(w.r.t the same decision sequence $\mathbf{d}$):
\begin{align}
(\mathbf{x}_2', \mathbf{u}_2', \boldsymbol{\lambda}_2)&=\textbf{CA-MPC}_2(\mathbf{x}_2,\mathbf{x}_1', \mathbf{P}_2, \mathbf{d}, +1)
\label{eq:drone2mpc}
\end{align}

Algorithm~\ref{alg:l2f} is designed to be computationally lighter than the 
MILP approach \eqref{eq:CentralMILP}, but unlike the MILP it is not complete.

\begin{algorithm}[t!]
	\SetAlgoLined
	\Notation{$(\mathbf{x}'_1, \mathbf{x}'_2,\mathbf{u}_1', \mathbf{u}_2')=\textbf{L2F}(\mathbf{x}_1, \mathbf{x}_2, \rtube_1,\rtube_2)$}
	\KwIn{Pre-planned trajectories $\mathbf{x}_1$, $\mathbf{x}_2$, robustness tubes $\rtube_1$, $\rtube_2$}
	\KwOut{Sequence of control signals $\mathbf{u}_1'$, $\mathbf{u}_2'$ for the two UAS, updated trajectories $\mathbf{x}_1'$, $\mathbf{x}_2'$}
	
	{Get $\mathbf{d}$ from conflict resolution}
	
	{UAS 1 solves CA-MPC optimization \eqref{eq:campc}:
	$(\mathbf{x}_1', \mathbf{u}_1', \boldsymbol{\lambda}_1)=\textbf{CA-MPC}_1(\mathbf{x}_1,\mathbf{x}_2, \mathbf{P}_1, \mathbf{d}, -1)$}
	
	\eIf{$\sum_k \lambda_{1,k} = 0$}
	{\textbf{Done}: UAS 1 alone has created separation; Set $\mathbf{u}_2'=\mathbf{u}_2$
	}
	{ UAS 1 transmits solution to UAS 2
		
		{UAS 2 solves CA-MPC optimization \eqref{eq:campc}:
		$(\mathbf{x}_2', \mathbf{u}_2', \boldsymbol{\lambda}_2)=\textbf{CA-MPC}_2(\mathbf{x}_2,\mathbf{x}_1', \mathbf{P}_2, \mathbf{d}, +1)$}
		
		\eIf{$\sum_k \lambda_{2,k} = 0$}{\textbf{Done:} UAS 2 has created 
			separation }
		{\eIf{$||p_{1,k}'-p_{2,k}'|| \geq \delta, \, \forall k = 0,\dotsc,H$}
			{\textbf{Done}: UAS 1 and UAS 2 created separation}
			{\textbf{Not done}: UAS still violate Equation \eqref{eq:msep}}}
	}
	
	Apply control signals $\mathbf{u}_1'$, $\mathbf{u}_2'$ if \textbf{Done}; else \textbf{Fail}.
	\caption{Learning-to-Fly: Decentralized and cooperative collision avoidance for two UAS. Also see Figure \ref{fig:concept}.}
	\label{alg:l2f}
	
\end{algorithm}
The solution of CA-MPC can be defined as follows:
\begin{definition}[Zero-slack solution]
	\label{def:zero_slack}
	The solution of the CA-MPC optimization \eqref{eq:campc}, is called the \textit{zero-slack solution} if for a given decision sequence $\mathbf{d}$ either
	
	1) there exists an optimal solution of \eqref{eq:campc} such that  $\sum_k\lambda_{1,k}=0$ or
	
	2) problem \eqref{eq:campc} is feasible with $\sum_k\lambda_{1,k}>0$ and there exists an optimal solution of \eqref{eq:campc} such that  $\sum_k\lambda_{2,k}=0$.
\end{definition}

The following Theorem~\ref{th:CAMPC_success} defines the sufficient condition for CA and Theorem~\ref{th:MILP_CAMPC_relation} makes important connections between the slack variables in CA-MPC formulation and binary variables in MILP. 
Both theorems
are direct consequences of the construction of CA-MPC optimizations. We omit the proofs for brevity.

\begin{theorem}[Sufficient condition for CA]
	\label{th:CAMPC_success}
	Zero-slack solution of \eqref{eq:campc}
	implies that the resulting trajectories for two UAS are non-conflicting and within the robustness tubes of the initial trajectories\footnote{Theorem~\ref{th:CAMPC_success} formulates a conservative result as \eqref{eq:pickaside} is a convex under approximation of the originally non-convex collision avoidance constraint \eqref{eq:noconf}. Indeed, non-zero slack $\exists k| \lambda_{2,k}>0$ does not necessarily imply the violation of the mutual separation requirement \eqref{eq:msep}. The control signals $u_1',u_2'$ computed by Algorithm~\ref{alg:l2f} can therefore in some instances still create separation between UAS even when the conditions of Theorem \ref{th:CAMPC_success} are not satisfied.}.
\end{theorem}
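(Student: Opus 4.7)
The plan is to argue the two conclusions (tube containment and mutual separation) separately, and to handle the two cases of Definition~\ref{def:zero_slack} in parallel since both reduce, after cancellation, to the same convex separation inequality from \eqref{eq:pickaside}.

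First I would dispatch the robustness-tube claim. In either case of Definition~\ref{def:zero_slack}, the updated trajectory $\mathbf{x}'_j$ is the optimum of a CA-MPC$_j$ instance whose feasible set explicitly enforces $Cx'_{j,k}\in P_{j,k}$ for all $k$, so $p'_{j,k}\in P_{j,k}$ is immediate from feasibility; for the UAS that does not maneuver in Case 1, $\mathbf{x}'_2 = \mathbf{x}_2$ by construction in Algorithm~\ref{alg:l2f}, and $\mathbf{x}_2$ lies in its tube by assumption that $\mathbf{x}_2$ was generated to satisfy the original STL specification with positive robustness (Theorem~\ref{thm:rob objective new}(ii)).

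Next I would handle the separation claim. Fix an arbitrary $k\in\{0,\ldots,H\}$. In Case 1 of Definition~\ref{def:zero_slack}, $\sum_k\lambda_{1,k}=0$ together with $\lambda_{1,k}\geq 0$ forces $\lambda_{1,k}=0$ for every $k$, so the separation constraint in CA-MPC$_1$ specializes (using $prty_1=-1$, $x_{\text{avoid}}=x_2$) to $M^{d_k}C(x'_{1,k}-x_{2,k})\leq q^{d_k}$. Since $\mathbf{x}'_2=\mathbf{x}_2$ in this case, this is exactly $M^{d_k}(p'_{1,k}-p'_{2,k})\leq q^{d_k}$. In Case 2, the analogous argument applied to CA-MPC$_2$ (with $prty_2=+1$ and $x_{\text{avoid}}=x'_1$) yields $M^{d_k}(p'_{1,k}-p'_{2,k})\leq q^{d_k}$ directly. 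Thus in both cases the convex inequality \eqref{eq:pickaside} is satisfied for the chosen decision $d_k$, which by its derivation implies $p'_{1,k}-p'_{2,k}\notin \mathbb{B}_{\delta/2}$ and hence $\|p'_{1,k}-p'_{2,k}\|_\infty\geq \delta/2$, so the two UAS are not in conflict at step $k$.

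I expect the argument itself to be mostly bookkeeping rather than a genuine obstacle, since the CA-MPC constraints are engineered precisely so that a zero-slack solution coincides with \eqref{eq:pickaside}. The one delicate point to spell out is the sign flip induced by $prty_j\in\{-1,+1\}$: one must verify that substituting $prty_1=-1$ with $x_{\text{avoid}}=x_2$, and $prty_2=+1$ with $x_{\text{avoid}}=x'_1$, both land on the same form $M^{d_k}(p'_{1,k}-p'_{2,k})\leq q^{d_k}$. The strict vs.\ non-strict inequality mismatch between \eqref{eq:pickaside} (strict) and the MPC constraint (non-strict) is a boundary technicality that the authors already flag as conservative in the accompanying footnote, so I would note it and proceed with the non-strict version of separation, matching the statement of Problem~\ref{prob:deconfliction}.
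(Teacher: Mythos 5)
Your argument is correct and is exactly the by-construction reasoning the paper intends (the paper explicitly omits this proof as a ``direct consequence of the construction of CA-MPC''): in both cases of Definition~\ref{def:zero_slack}, zero slack collapses the priority-signed constraint to $M^{d_k}(p'_{1,k}-p'_{2,k})\leq q^{d_k}$, and tube containment is immediate from feasibility (with the non-maneuvering UAS trivially in its own tube since $0\in\mathbb{B}_\rho$). The one blemish is inherited from the paper itself: your concluding bound $\|p'_{1,k}-p'_{2,k}\|_\infty\geq\delta/2$ follows the $\mathbb{B}_{\delta/2}$ of \eqref{eq:noconf}, whereas Definition~\ref{def:msep} and \eqref{eq:msep} require separation $\geq\delta$ (the paper's own worked example with $q^i=-\delta$ uses the consistent half-width $\delta$), so the cube in \eqref{eq:noconf} should be read as having half-width $\delta$ for the conclusion to match the theorem statement.
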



\begin{theorem}[Existence of the zero-slack solution]
	\label{th:MILP_CAMPC_relation}
	Feasibility of the MILP problem~\eqref{eq:CentralMILP} implies the existence of the zero-slack solution of CA-MPC optimization \eqref{eq:campc}.
\end{theorem}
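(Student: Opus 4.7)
The plan is to use any feasible MILP solution as an explicit witness satisfying one of the two alternatives in Definition~\ref{def:zero_slack}. Let $(\mathbf{x}_1^*, \mathbf{x}_2^*, \mathbf{b}^*)$ be any feasible point of the MILP~\eqref{eq:CentralMILP}. First I would distill a decision sequence $\mathbf{d}^*$ by picking, for each time step $k$, some $d_k^* \in \{i : b^{*,i}_k = 1\}$; this set is non-empty since the MILP enforces $\sum_{i=1}^{6} b^{*,i}_k \geq 1$. Because $b^{*,d_k^*}_k = 1$ deactivates the big-$M$ relaxation on that particular index, the MILP constraints yield $M^{d_k^*} C(x_{1,k}^* - x_{2,k}^*) \leq q^{d_k^*}$ for every $k \in \{0,\dotsc,H\}$, which is precisely the separation inequality appearing in the CA-MPC specialized to the decision $d_k^*$ with zero slack.

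\textbf{Case analysis.} Fix the sequence $\mathbf{d}^*$ and run the CA-MPC~\eqref{eq:campc} for UAS~1 with $\mathbf{x}_{avoid} = \mathbf{x}_2$ (the pre-planned trajectory) and $prty_1 = -1$. If an optimal solution achieves $\sum_k \lambda_{1,k} = 0$, condition~(1) of Definition~\ref{def:zero_slack} is in force and we are done. Otherwise, I would take $(\mathbf{x}_1', \mathbf{x}_2') = (\mathbf{x}_1^*, \mathbf{x}_2^*)$ as the joint witness. Feasibility of UAS~1's CA-MPC is immediate: $\mathbf{x}_1^*$ already satisfies every non-separation constraint (initial condition, dynamics, tube containment, input and state bounds) by MILP feasibility, and any violation of the separation inequality is absorbed into the nonnegative slack $\lambda_{1,k} = \max\{0,\, M^{d_k^*} C(x_{1,k}^* - x_{2,k}) - q^{d_k^*}\}$. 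For UAS~2's CA-MPC with $\mathbf{x}_{avoid} = \mathbf{x}_1^*$ and $prty_2 = +1$, substituting $\mathbf{x}_2^*$ satisfies every common constraint by MILP feasibility, and by the inequality just derived it satisfies the separation constraint with $\lambda_{2,k} = 0$. Since the MPC objective is a sum of nonnegative slacks, this is optimal, so condition~(2) holds.

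\textbf{Main obstacle.} The subtle point is that Definition~\ref{def:zero_slack} implicitly couples the two CA-MPCs: in Algorithm~\ref{alg:l2f}, UAS~2 runs its optimization against the trajectory actually returned by UAS~1's optimizer, not against an arbitrary feasible UAS~1 trajectory. Under the existential reading of the definition---namely that zero-slack existence means there \emph{exists} a pair of optimal CA-MPC solutions jointly certifying the slack conditions---the construction above closes cleanly. If instead one insists that condition~(2) must hold for every minimizer of UAS~1's CA-MPC, the proof requires an additional compatibility argument showing either that $\mathbf{x}_1^*$ itself lies among UAS~1's optima, or that any alternative UAS~1 optimum can still be matched by some zero-slack $\mathbf{x}_2'$ drawn from the corresponding MILP feasible set. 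Pinning down this selection is the principal difficulty worth addressing in a full write-up.
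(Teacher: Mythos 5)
Your construction is exactly the paper's intended argument: the paper omits the proof, saying only that the theorem is a ``direct consequence of the construction'' and that the binary variables $b^i_k$ from a feasible MILP solution, used to select the CA-MPC constraints, witness the zero-slack solution --- which is precisely your extraction of $\mathbf{d}^*$ from $\mathbf{b}^*$ and substitution of $(\mathbf{x}_1^*,\mathbf{x}_2^*)$. The coupling issue you flag is a genuine looseness in Definition~\ref{def:zero_slack}, but its condition~(2) literally asks only for feasibility of UAS~1's problem with positive slack together with the existence of \emph{some} zero-slack optimum for UAS~2's problem, so the existential reading you adopt is the one the paper relies on and no further selection argument is required.
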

The Theorem~\ref{th:MILP_CAMPC_relation} states that the binary decision variables $b^i_k$ selected by the feasible solution of the MILP problem \eqref{eq:CentralMILP}, when used to select the constraints (defined by $M,\,q$) for the CA-MPC formulations for UAS 1 and 2, imply the existence of a zero-slack solution of \eqref{eq:campc}.



\subsection{Learning-based conflict resolution}
\label{sec:learning_supervised}

Motivated by Theorem~\ref{th:MILP_CAMPC_relation},
we propose to learn offline the conflict resolution policy 
from the MILP solutions and then online use already learned policy.
To do so, we use a \textit{Long Short-Term Memory} (LSTM)~\cite{hochreiter1997long} recurrent neural network augmented with fully-connected layers.
LSTMs perform better than traditional recurrent neural networks on sequential prediction tasks~\cite{gers2002learning}.
 
 \begin{figure}[tb]
 	\begin{center}
 		\includegraphics[width=0.9\textwidth]{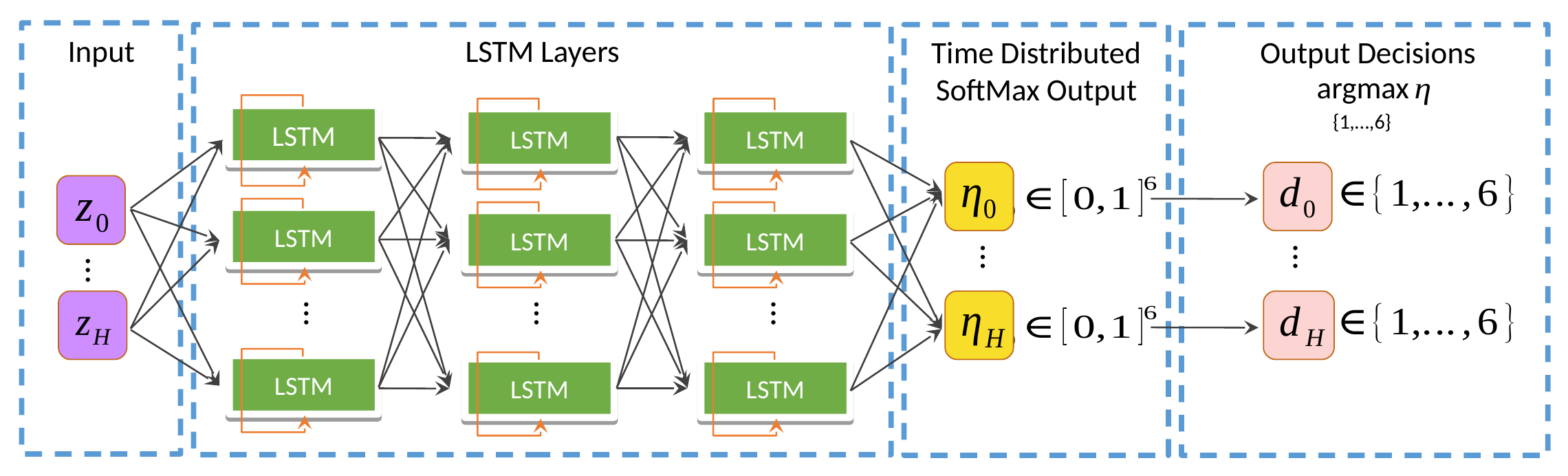}
 	\end{center}
 	\vspace{-10pt}
 	{\caption{Proposed LSTM model architecture for CR-S. LSTM layers are shown unrolled over $H$ time steps. The inputs are 
 			$z_k$ which are the differences between the planned UAS positions, and the outputs are decisions $d_k$ for conflict resolution at each time $k$ in the horizon.}
 	\label{fig:lstm_arc}}
 \end{figure} 

The network is trained to map a difference trajectory $\mathbf{z}=\mathbf{x}_1-\mathbf{x}_2$ (as in Equation~\eqref{eq:noconf}) to a decision sequence $\mathbf{d}$ that deconflicts pre-planned trajectories $\mathbf{x}_1$ and $\mathbf{x}_2$. For creating the training set, $\mathbf{d}$ is produced by solving the MILP problem~\eqref{eq:CentralMILP}, i.e. obtaining 
a sequence of binary decision variables $\mathbf{b}\in\{0,1\}^{6(H+1)}$ and translating it into the decision sequence $\mathbf{d}\in\{1,\ldots,6\}^{H+1}$. 


The proposed architecture is presented in 
 Figure~\ref{fig:lstm_arc}.
The input layer is 
connected to the block of three stacked LSTM layers.
The output layer is a time distributed dense layer with a 
softmax activation function that produces the class probability estimate $\eta_k=[\eta_k^1,\ldots,\eta_k^6]^\top$ for each $k\in\{0,\ldots,H\}$, which corresponds to a 
decision $d_k=\argmax_{i=1,\ldots 6} \eta_k^i$. 

\subsection{Conflict Resolution Repairing}
\label{sec:repair}

The total number of possible conflict resolution (CR) decision sequences of over a time horizon of $H$ steps is $H^6$. 
Learning-based collision resolution produces only one such CR sequence, and since it is not guaranteed to be correct, an inadequate CR sequence might lead to the CA-MPC being unable find a feasible solution of \eqref{eq:campc}, i.e. a failure in resolving a collision. To make the CA algorithm more resilient to such failures, we propose a heuristic that instead of generating only one CR sequence, generates a number of slightly modified sequences, aka backups, with an intention of increasing the probability of finding an overall solution for CA.     
We call it a \textit{CR repairing algorithm}. We propose the following scheme for CR repairing. 


\subsubsection{Na\"ive repairing scheme for generating CR decision sequences}
The na\"ive-repairing algorithm is based on the initial supervised-learning CR architecture, see Section~\ref{sec:learning_supervised}.
The proposed DNN model for CR has the output layer with a
softmax activation function that produces the class probability estimates $\eta_k=[\eta_k^1,\ldots,\eta_k^6]^\top$ for each time step $k$, see Figure~\ref{fig:lstm_arc}. Discrete decisions were chosen as:
\begin{equation}
d_k = \underset{i=1,\ldots 6}{\argmax}\  \eta_k^i,
\end{equation}
which corresponds to the highest probability class for time step $k$. Denote such choice of $d_k$ as $d_k^1$.

Analogously to the idea of top-1 and top-$S$ accuracy rates used in image classification~\cite{ILSVRC15}, where not only the highest predicted class counts but also the top $S$ most likely labels, 
we define higher order decisions $d^s_k$ as following:
instead of choosing the highest probability class at time step $k$,
one could choose the second highest probability class ($s=2$), third highest ($s=3$), up to the sixth highest ($s=6$).

Formally, the second highest probability class choice $d_k^2$ is defined as:
\begin{equation}
	d_k^2=\underset{i=1,\ldots 6,\,i\not=d_k^1}{\argmax}\  \eta_k^i
\end{equation}
In the same manner, we define decisions up to $d_k^6$. 
General formula for the $s$-th highest probability class, decision $d_k^s$ is defined as following ($s=1,\ldots,6$):
\begin{equation}
\label{eq:d_ks}
d_k^s=\underset{i=1,\ldots 6,\,i\not=d_k^j\ \forall j<s}{\argmax}\  \eta_k^i
\end{equation}

Using equation \eqref{eq:d_ks} to generate decisions $d_k$ at time step $k$, we define the na\"ive scheme for generating new decision sequences $\mathbf{d}'$ following Algorithm~\ref{alg:naive_rep}.

\begin{algorithm}
	\SetAlgoLined
	\Notation{$(\mathbf{x}_1', \mathbf{x}_2', \mathbf{u}_1', \mathbf{u}_2') = \textbf{Repairing}(\mathbf{x}_1, \mathbf{x}_2, \rtube_1, \rtube_2, \varUpsilon)$}
	\KwIn{Pre-planned trajectories $\mathbf{x}_1$, $\mathbf{x}_2$, robustness tubes $\mathbf{P}_1$, $\mathbf{P}_2$, original decision sequence $\mathbf{d}$, class probability estimates $\mathbf{\eta}$,
		set of collision indices: $\varUpsilon=\{ k:\ ||p_{1,k}'-p_{2,k}'|| < \delta, \ 0\leq k\leq H \}$.}
	\KwOut{Sequence of control signals $\mathbf{u}_1'$, $\mathbf{u}_2'$ for the two UAS, updated trajectories $\mathbf{x}_1'$, $\mathbf{x}_2'$}

	\For{$s=2,\ldots,6$}{
		
		Define repaired sequence $\mathbf{d}'$ using na\"ive scheme as follows:
		
		\begin{itemize}
			\item[-] $\forall k\not\in \varUpsilon:\  d'_k=d_k$
			\item[-] $\forall k\in \varUpsilon:\ d'_k=d_k^s=\argmax_{i=1,\ldots 6,\,i\not=d_k^j\ \forall j<s}\  \eta_k^i$
		\end{itemize}		
		
		$(\mathbf{x}_1', \mathbf{x}_2', \mathbf{u}_1', \mathbf{u}_2')=\textbf{CA-MPC}(\mathbf{x}_1, \mathbf{x}_2, \mathbf{P}_1, \mathbf{P}_2, \mathbf{d}')$
		
		\If{$||p_{1,k}'-p_{2,k}'|| \geq \delta, \, \forall k = 0,\dotsc,N$}
		{\textbf{Break}: Repaired CR sequence $\mathbf{d}'$ led to UAS 1 and UAS 2 creating separation
		}		
	} 
\If{$||p_{1,k}'-p_{2,k}'|| \geq \delta, \, \forall k = 0,\dotsc,H$}
{$\mathbf{d}'=\mathbf{d}$: Repairing failed. Return trajectories for the original decision sequence.} 
	\caption{Na\"ive scheme for CR repairing}
	\label{alg:naive_rep}
\end{algorithm}

\begin{example}
	Let the horizon of interest be only $H=5$ time steps and the initially obtained decision sequence be $\mathbf{d}=(1, 1, 1, 1, 1)$.
	Given the collision was detected at time steps 2 and 3, i.e. $\varUpsilon=(2, 3)$, let the second-highest probability decisions be
	$d_{2}^2=3$ and $d_{3}^2=5$.
	Then the proposed repaired decision sequence is
	$\mathbf{d}'=(1, 1, 3, 5, 1)$. If such CR sequence $\mathbf{d}'$ still violates the mutual separation requirement, then the na\"ive repairing scheme will propose another decision sequence using the third-highest probability decisions $d_3$. Let $d_2^3=2$ and $d_3^3=3$ then
	$\mathbf{d}'=(1, 1, 2, 3, 1)$. If it fails again, the next generated  sequence will use fourth-highest decisions, and so on up to the fifth iteration of the algorithm (requires $d_k^6$ estimates). If none of the sequences managed to create separation, the original CR sequence $\mathbf{d}=(1, 1, 1, 1, 1)$ will be returned. 
\end{example}

Other variations of the na\"ive scheme are possible. For example, one can use augmented set of collision indices $\varUpsilon$ or another order of decisions $d_k$ across the time indices, e.g. replace decisions $d_k$ one-by-one rather than all $d_k$ for collision indices $\varUpsilon$ at once. 
Moreover, other CR repairing schemes can be efficient and should be explored. We leave it for future work.

\section{Learning-`N-Flying: Decentralized Collision Avoidance for Multi-UAS Fleets}
\label{sec:lnf}

The L2F framework of Section ~\ref{sec:l2f} was tailored for \ac{CA} between two \ac{UAS}. When more than two \ac{UAS} are simultaneously on a collision path, applying L2F pairwise for all \ac{UAS} involved might not necessarily result in all future collisions being resolved. Consider the following example:

\begin{example}
	\label{ex:rts}
	Figure~\ref{fig:test-pairwise} depicts an experimental setup. Scenario consists of 3 UAS which must reach
	desired goal states within 4 seconds while avoiding each other, minimum allowed separation is set to $\delta=0.1m$.
	Initially pre-planned UAS trajectories have a simultaneous collision  across all UAS located at $(0,0,0)$.
	Robustness tubes radii were fixed at $\rho=0.055$ and
	UAS priorities were set in the increasing order, e.g. UAS with a lower index had a lower priority: $1<2<3$.
	First application of L2F 	
	lead to resolving collision for UAS 1 and UAS 2, see Figure~\ref{fig:test-pairwise}(a).
	Second application resolved collision for UAS 1 and UAS 3 by UAS 3 deviating vertically downwards, see Figure~\ref{fig:test-pairwise}(b).
	The third application led to UAS 3 deviate vertically upwards, which resolved collision for UAS 2 and UAS 3, though created a re-appeared violation of minimum separation for UAS 1 and UAS 3 in the middle of their trajectories, see Figure~\ref{fig:test-pairwise}(c). 
\end{example}

\begin{figure}[b!]
	\begin{subfigure}[b]{0.3\columnwidth}
		\includegraphics[width=\textwidth]{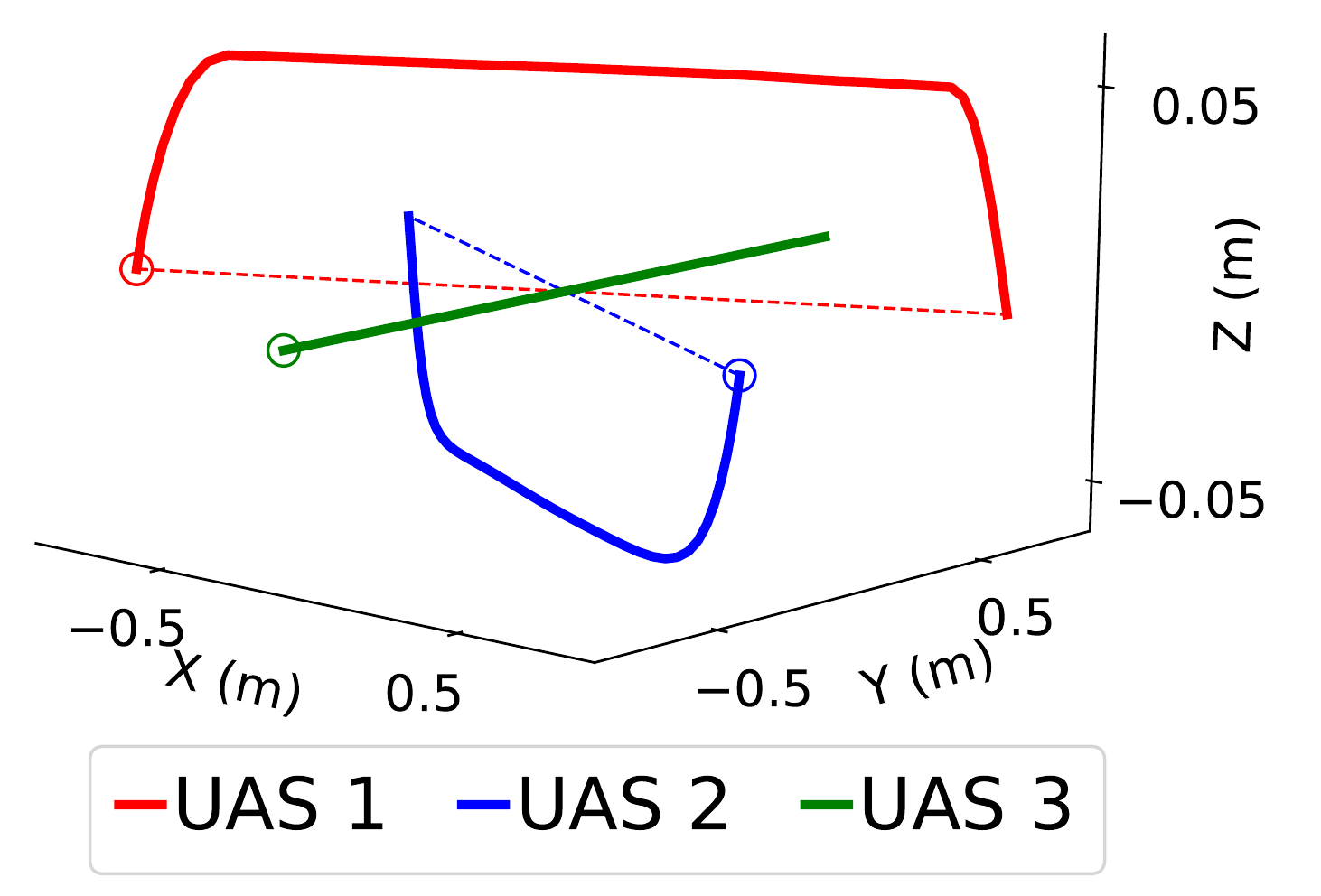}
	\caption{L2F for pair UAS 1, UAS 2. Pairwise separations: $\delta_{12}= 0.11$m, $\delta_{13}=0.06$m, $\delta_{23}= 0.05$m.}
	\end{subfigure}
	\hspace{5pt}
	\begin{subfigure}[b]{0.3\columnwidth}
		\includegraphics[width=\textwidth]{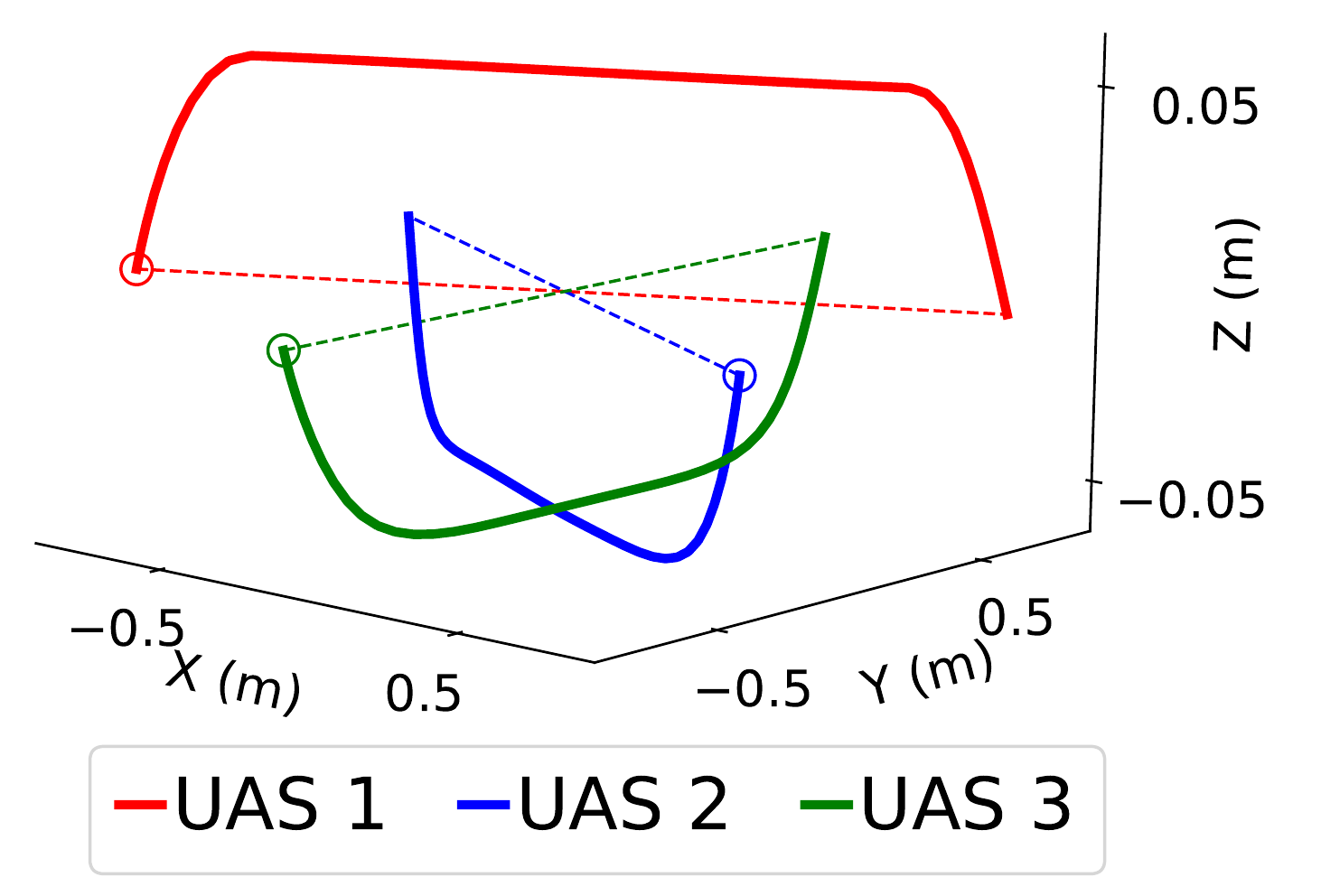}
		\caption{L2F for pair UAS 1, UAS 3. Pairwise separations: $\delta_{12}= 0.11$m, $\delta_{13}=0.11$m, $\delta_{23}= 0.04$m.}
	\end{subfigure}
	\hspace{5pt}
	\begin{subfigure}[b]{0.3\columnwidth}
		\includegraphics[width=\textwidth]{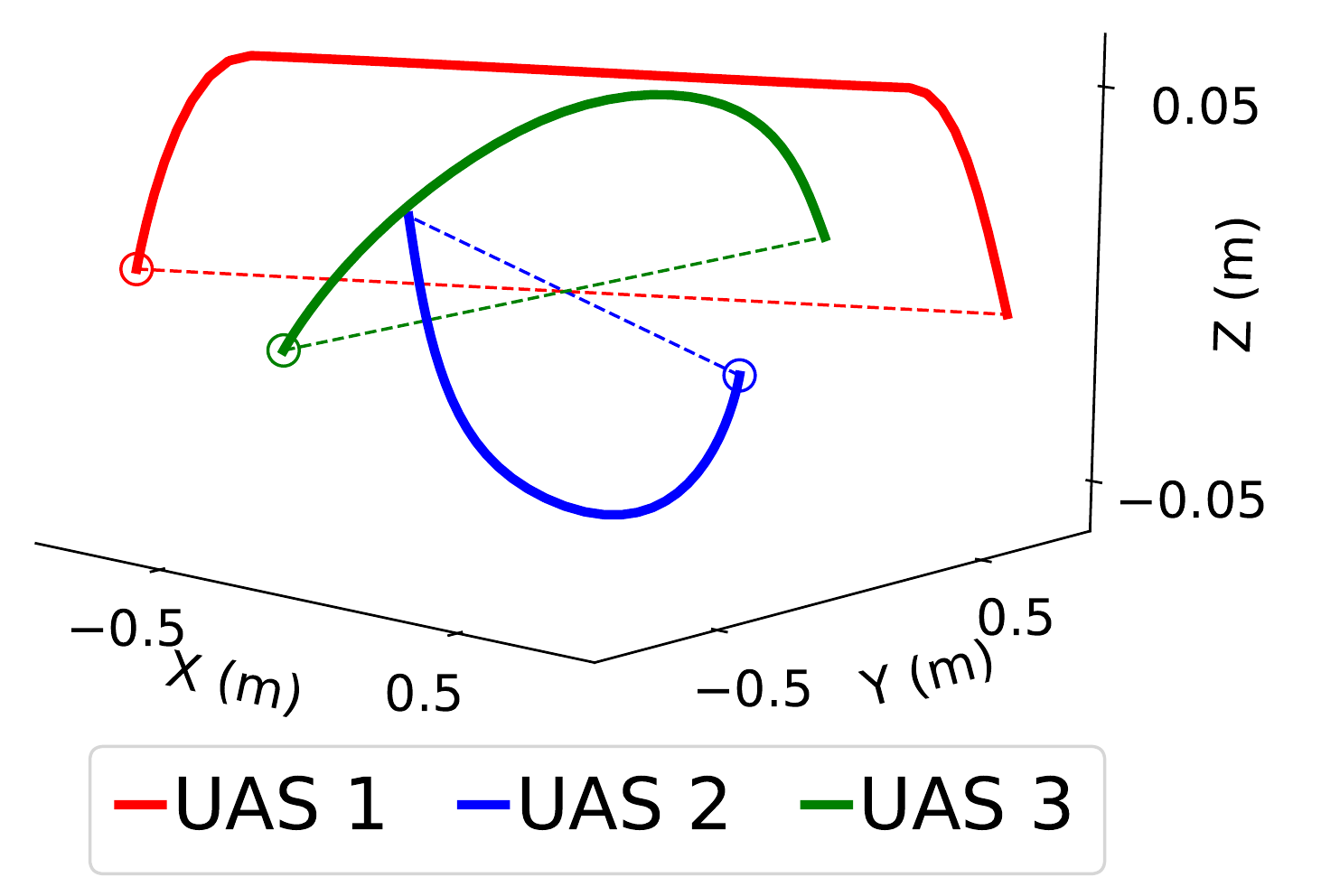}
		\caption{L2F for pair UAS 2, UAS 3 results. Pairwise separations: $\delta_{12}= 0.11$m, $\delta_{13}=0.01$m, $\delta_{23}= 0.1$m.}
	\end{subfigure}
	\vspace{-5pt}
	\caption{Sequential L2F application for the 3 UAS scenario. Pre-planned colliding trajectories are depicted in dashed lines. Simultaneous collision is detected at point $(0,0,0)$. The updated trajectories generated by L2F are depicted in solid color. Initial positions of UAS marked by ``O''. }
	\label{fig:test-pairwise}
\end{figure}

To overcome this \textit{live lock} like issue, where repeated pair-wise applications of L2F only result in new conflicts between other pairs of \ac{UAS}, we propose a modification of L2F called Learning-N-Flying (LNF). The LNF framework is based on pairwise application of L2F, but also incorporates a \emph{Robustness Tube Shrinking} (RTS) process described in Section~\ref{sec:shrinking} after every L2F application. The overall LNF framework is presented in Algorithm~\ref{alg:lnf}. 
Section~\ref{sec:experiments_lnf} presents extensive simulations to show the applicability of the LNF scheme to scenarios where more than two \ac{UAS} are on collisions paths, including in high-density \ac{UAS} operations.



\begin{algorithm}[t!]
	\SetAlgoLined
	\KwIn{Pre-planned fleet trajectories $\mathbf{x}_i$, initial robustness tubes $\rtube_i$, UAS priorities}
	\KwOut{New trajectories $\mathbf{x}_i'$, new robustness tubes $\rtube'_i$, control inputs $u'_{i,0}$}
	
	{Each UAS $i$ detects the set of UAS that it is in conflict with: $S=\{j\ |\ \exists k\ ||p_{i,k}-p_{j,k}|| < \delta, \ 0\leq k\leq H \}$}
	
	{Order $S$ by the UAS priorities}
	
	\For{$j\in S$}{
			{$(\mathbf{x}_i', \mathbf{x}_j', \mathbf{u}_i', \mathbf{u}_j')=\textbf{L2F}(\mathbf{x}_i, \mathbf{x}_j, \rtube_i, \rtube_j)$, see Section~\ref{sec:l2f}}
			
						
			\If{$\varUpsilon=\{ k:\ ||p_{i,k}'-p_{j,k}'|| < \delta, \ 0\leq k\leq H \}\not=\emptyset$}
			{
				{$(\mathbf{x}_i', \mathbf{x}_j', \mathbf{u}_i', \mathbf{u}_j') = \textbf{Repairing}(\mathbf{x}_i, \mathbf{x}_j, \rtube_i, \rtube_j, \varUpsilon)$}
			}
			
			{$(\rtube'_i, \rtube'_j)=\rts(\mathbf{x}'_i, \mathbf{x}'_j, \rtube_i,\rtube_j)$}		
		}
		{Apply controls $u_{i,0}'$ for the initial time step of the receding horizon}
	\caption{Learning-`N-Flying: Decentralized and cooperative collision avoidance for multi-UAS fleets. Applied in a receding horizon manner by each UAS $i$.}
	\label{alg:lnf}
\end{algorithm}

\subsection{Robustness tubes shrinking (RTS)}
\label{sec:shrinking}

\newcommand{\xone}{\mathbf{x}_1}
\newcommand{\xtwo}{\mathbf{x}_2}
\newcommand{\xthree}{\mathbf{x}_3}
\newcommand{\xpone}{\mathbf{x}_1'}
\newcommand{\xptwo}{\mathbf{x}_2'}
\newcommand{\xpthree}{\mathbf{x}_3'}

The high-level of idea of RTS is that, when two trajectories are de-collided by L2F, we want to constrain their further modifications by L2F so as not to induce new collisions. 
In Example~\ref{ex:rts}, after collision-free $\xpone$ and $\xptwo$ are produced by L2F and before $\xptwo$ and $\xthree$ are de-collided, we want to constrain any modification to $\xptwo$ s.t. it does not collide again with $\xpone$. 
Since trajectories are constrained to remain within robustness tubes, we simply shrink those tubes to achieve this.
The amount of shrinking is $\delta$, the minimum separation.
RTS is described in Algorithm~\ref{alg:shrinking}. 


\begin{algorithm}[t!]
	\SetAlgoLined
	\Notation{$(\rtube'_1, \rtube'_2)=\rts(\mathbf{x}'_1, \mathbf{x}'_2, \rtube_1,\rtube_2)$}
	\KwIn{New trajectories $\mathbf{x}'_1$, $\mathbf{x}'_2$ generated by L2F, initial robustness tubes $\rtube_1$, $\rtube_2$}
	\KwOut{New robustness tubes $\rtube'_1$, $\rtube'_2$}
	
	
	{Set $msep=\min_{0\leq k\leq H}||p_{1,k}'-p_{2,k}'||$}
	
	\For{$k=0,\ldots,H$}{
		\eIf{$\mathbf{dist}(P_{1, k}, P_{2, k})\geq \delta$}
		{
			{No shrinking required: $P'_{1,k}=P_{1,k},\ P'_{2,k}=P_{2,k}$}
		}
		{			
			
			{Determine the axis ($X$, $Y$ or $Z$) of maximum separation between $p'_{1,k}$ and $p'_{2,k}$}
			
			{Define the 3D box $\varPi_k$ with edges of size $\min(msep,\delta)$ along the determined axis and infinite edges along other two axes}
			
			{Center $\varPi_k$ at the midpoint between $p'_{1,k}$ and $p'_{2,k}$}
			
			{Remove $\varPi_k$ from both tubes:
				$P'_{1, k} = P_{1, k} \setminus \varPi_k,\ 
				P'_{2, k} = P_{2, k} \setminus \varPi_k$}
		}
	}
	\caption{Robustness tubes shrinking. Also see Figure~\ref{fig:tubes_shrinking}.}
	\label{alg:shrinking}
\end{algorithm}

\begin{figure}[t!]
	\centering
	\includegraphics[width=0.99\textwidth]{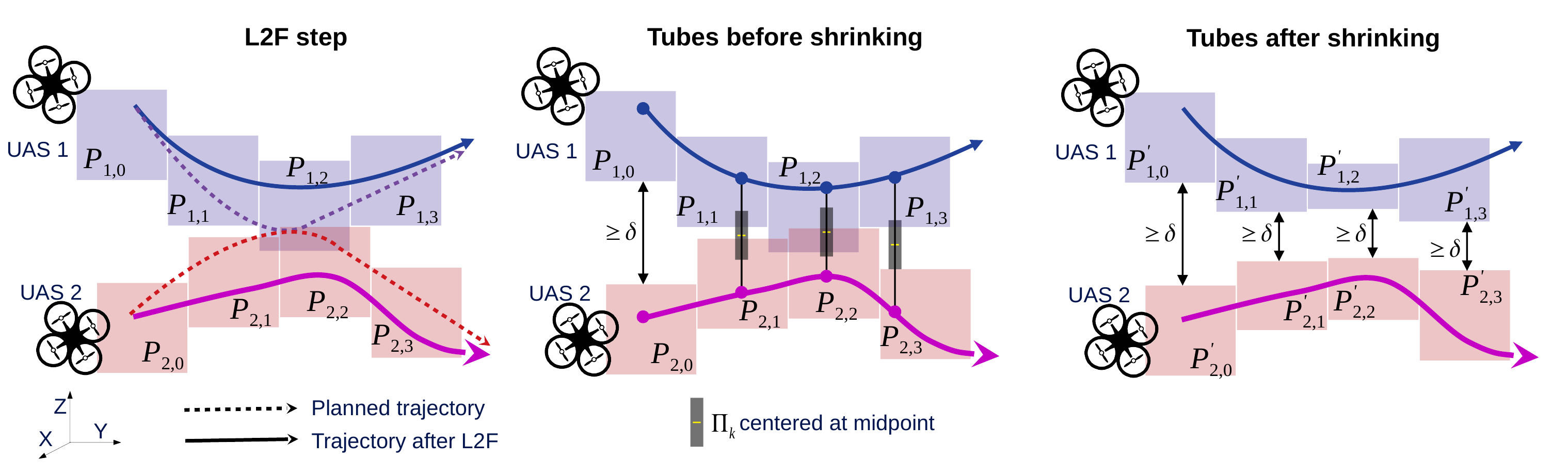}
	\vspace{-10pt}
	\caption{Visualization of the robustness tubes shrinking process.}
	\label{fig:tubes_shrinking}
\end{figure}

\begin{example}
	Figure~\ref{fig:tubes_shrinking}(a) presents the initial discrete-time robustness tubes and trajectories for UAS 1 and UAS 2. Successful application of L2F resolves the detected collision between initially planned trajectories $\mathbf{p}_1$, $\mathbf{p}_2$, depicted in dashed line. New non-colliding trajectories $\mathbf{p}_1'$ and $\mathbf{p}_2'$ produced by L2F are in solid color. Figure~\ref{fig:tubes_shrinking}(b) shows that for time step $k=0$ no shrinking is required since the robustness tubes $P_{1,0}$, $P_{2,0}$ are already $\delta$-separate. For time steps $k=1,2,3$, the axis of maximum separation between trajectories is $Z$, therefore, 
	boxes $\varPi_k$ are defined to be of height $\delta$ with infinite width and length. Boxes $\varPi_k$ are drawn in gray, midpoints between the trajectories are drawn in yellow. 
	Figure~\ref{fig:tubes_shrinking}(c) depicts the updated $\delta$-separate robustness tubes $\mathbf{P}'_1$ and $\mathbf{P}'_2$.
\end{example}


\begin{theorem}[Sufficient condition for $\delta$-separate tubes]
	\label{th:rts_success}
	Zero-slack solution of \eqref{eq:campc}
	implies that robustness tubes updated by RTS procedure are the subsets of the initial robustness tubes and $\delta$-separate, e.g. for robustness tubes $(\rtube'_1, \rtube'_2)=\rts(\mathbf{x}'_1, \mathbf{x}'_2, \rtube_1,\rtube_2)$, the following two properties hold:
	\begin{align}
	\label{eq:lnf_tube_dist}
	&\mathbf{dist}(\rtube_1', \rtube_2') \geq \delta\\
	\label{eq:lnf_tube_subset}
	& \rtube_j' \subseteq \rtube_j,\ \forall j\in \{1,2\} 
	\end{align}
\end{theorem}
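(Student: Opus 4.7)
The plan is to prove the two claims of the theorem separately. The subset property~\eqref{eq:lnf_tube_subset} is immediate from the algorithm; the $\delta$-separation property~\eqref{eq:lnf_tube_dist} requires a short geometric argument per time step, combined with the lower bound on inter-UAS distance supplied by Theorem~\ref{th:CAMPC_success}.

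First, for~\eqref{eq:lnf_tube_subset}, inspection of Algorithm~\ref{alg:shrinking} shows that at each $k$ we either keep $P_{j,k}$ or replace it by $P_{j,k} \setminus \varPi_k$. Either way $P'_{j,k} \subseteq P_{j,k}$ since set difference can only shrink a set, and taking the product/union over $k$ gives $\rtube'_j \subseteq \rtube_j$. Next, for~\eqref{eq:lnf_tube_dist}, I would invoke Theorem~\ref{th:CAMPC_success}: a zero-slack solution of~\eqref{eq:campc} implies $\|p'_{1,k} - p'_{2,k}\|_\infty \geq \delta$ for every $k$. This has two consequences. First, $msep = \min_k \|p'_{1,k}-p'_{2,k}\| \geq \delta$, so the slab $\varPi_k$ removed by RTS has width exactly $\min(msep,\delta) = \delta$ along its chosen axis. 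Second, the separation between $p'_{1,k}$ and $p'_{2,k}$ along the axis of maximum separation is at least $\delta$. A time-step-wise case analysis then finishes the argument: if $\mathbf{dist}(P_{1,k},P_{2,k}) \geq \delta$ already, no shrinking occurs and the bound is preserved trivially; otherwise, $\varPi_k$ is centered at the midpoint $m_k = (p'_{1,k}+p'_{2,k})/2$ and extends $\delta/2$ on each side of $m_k$ along the chosen axis, so each $p'_{j,k}$ lies at least $\delta/2$ away from $m_k$ along that axis, on opposite sides of the slab. Any point of $P'_{j,k}$ on the same side as $p'_{j,k}$ is then at axis-distance at least $\delta/2$ from $m_k$, and two such points on opposite sides have $\infty$-distance at least $\delta$.

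The main obstacle is the subtlety that the literal set $P_{j,k} \setminus \varPi_k$ may contain stray points on the \emph{wrong} side of the slab, since $\varPi_k$ has finite extent only along one axis. To close the argument, I would interpret the RTS output (as suggested by Fig.~\ref{fig:tubes_shrinking}(c)) as retaining only the connected component of $P_{j,k} \setminus \varPi_k$ that contains $p'_{j,k}$. This interpretation is consistent with the downstream use of $\rtube'_j$ as a constraint region for the next L2F invocation, which must contain the UAS's current trajectory point, and it yields the claimed per-time-step separation; combining with the trivial time steps produces \eqref{eq:lnf_tube_dist}.
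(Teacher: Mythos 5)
Your argument follows the same skeleton as the paper's proof: \eqref{eq:lnf_tube_subset} is immediate from $P'_{j,k}=P_{j,k}\setminus\varPi_k$, and \eqref{eq:lnf_tube_dist} comes from Theorem~\ref{th:CAMPC_success} giving $msep\geq\delta$, hence a slab of width exactly $\delta$ centered at the midpoint along the axis of maximum separation. The substantive difference is the last paragraph of your write-up, and it is to your credit: the paper's proof simply asserts that because both tubes are updated by removing $\varPi_k$, the results are $\delta$-separate, which is not literally true. Since $\varPi_k$ is a slab of width only $\delta$ and the tubes have side $2\rho$ with $\rho$ possibly exceeding $\delta$ (e.g.\ $\rho/\delta=1.15$ in the experiments), $P_{j,k}\setminus\varPi_k$ can retain pieces of both tubes on the \emph{same} side of the slab, and those pieces can overlap, making $\mathbf{dist}(P'_{1,k},P'_{2,k})=0$. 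Your fix --- retaining only the connected component of $P_{j,k}\setminus\varPi_k$ containing $p'_{j,k}$ --- is exactly what Figure~\ref{fig:tubes_shrinking}(c) depicts and what the downstream lemmas (in particular Lemma~\ref{lemma:p_in_tube} and Lemma~\ref{lemma:tubes}) need; with it, your per-axis distance argument correctly yields \eqref{eq:lnf_tube_dist}. So your proof is not just consistent with the paper's, it closes a gap that the paper's one-line justification of $\delta$-separation leaves open. The only minor caveat is a boundary issue you share with the paper: when $\|p'_{1,k}-p'_{2,k}\|_\infty=\delta$ exactly, the trajectory points lie on the boundary of a closed $\varPi_k$, so one should take $\varPi_k$ open (or the tubes closed) for $\mathbf{p}'_j\in\rtube'_j$ to survive the shrinking.
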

See the proof in the appendix Section~\ref{sec:rts_appndx}.

\subsection{Combination of L2F with RTS}
Three following lemmas define important properties of L2F combined with the  shrinking process. Proofs can be found in the appendix Section~\ref{sec:rts_appndx}.
\begin{lemma}
	\label{lemma:p_in_tube}
	Let two trajectories $\mathbf{x}_1'$, $\mathbf{x}_2'$ be generated by L2F and let the robustness tubes $\rtube_1'$, $\rtube_2'$ be the updated tubes generated by RTS procedure from initial tubes $\rtube_1$, $\rtube_2$ using the trajectories $\mathbf{x}_1'$, $\mathbf{x}_2'$. Then 
	\begin{equation}
	\mathbf{p}_j' \in \rtube_j',\ \forall j\in \{1,2\}.
	\end{equation}
\end{lemma}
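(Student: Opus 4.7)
The plan is to verify the containment $p'_{j,k}\in P'_{j,k}$ directly for every time step $k$, split across the two branches of Algorithm~\ref{alg:shrinking}. The starting point, inherited from L2F, is that $p'_{j,k}\in P_{j,k}$ for all $k$ and $j\in\{1,2\}$: this is exactly the hard constraint $Cx'_{j,k}\in P_{j,k}$ in the CA-MPC optimization~\eqref{eq:campc} that each UAS solves. So all that remains is to argue that the slice produced by RTS still contains the trajectory point.

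Fix $k$. In the branch $\mathbf{dist}(P_{1,k}, P_{2,k})\geq \delta$ we have $P'_{j,k}=P_{j,k}$ and there is nothing to prove. In the shrinking branch, $P'_{j,k}=P_{j,k}\setminus\varPi_k$, so it suffices to check $p'_{j,k}\notin\varPi_k$. By construction, $\varPi_k$ is centered at the midpoint $m_k=(p'_{1,k}+p'_{2,k})/2$ with thickness $\min(msep,\delta)$ along the axis of maximum separation between $p'_{1,k}$ and $p'_{2,k}$, and has infinite extent in the other two directions. Taking that axis as the third coordinate and writing $a_z$ for the third component of a vector $a$,
\[
|(p'_{j,k})_z-(m_k)_z| \;=\; \tfrac12\,|(p'_{1,k})_z-(p'_{2,k})_z| \;=\; \tfrac12\,\|p'_{1,k}-p'_{2,k}\|_\infty \;\geq\; \tfrac{msep}{2} \;\geq\; \tfrac{\min(msep,\delta)}{2},
\]
using the definition of $m_k$, the fact that the chosen axis achieves the maximum coordinatewise gap, and $msep=\min_k\|p'_{1,k}-p'_{2,k}\|_\infty$. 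Hence $p'_{j,k}$ lies outside the finite extent of $\varPi_k$, and since $\varPi_k$ is unbounded in the remaining two directions, this single-coordinate test already forces $p'_{j,k}\notin\varPi_k$. Combined with $p'_{j,k}\in P_{j,k}$, this yields $p'_{j,k}\in P_{j,k}\setminus\varPi_k=P'_{j,k}$.

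The main subtlety I expect is the boundary case, where the inequality above holds with equality and $p'_{j,k}\in\partial\varPi_k$. I would handle this by adopting the open-box convention for the removed set $\varPi_k$ (equivalently, remove only its interior), which is consistent with the paper's use of a strict inequality to define a conflict in Definition~\ref{def:msep} and a non-strict inequality in the separation constraint of Problem~\ref{prob:deconfliction}. With that convention in place, the lemma follows uniformly across all time steps and both UAS, and the rest of the argument is a straightforward geometric bookkeeping check.
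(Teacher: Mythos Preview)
Your proof is correct and follows essentially the same approach as the paper's: both start from the CA-MPC constraint $p'_{j,k}\in P_{j,k}$, then argue that $p'_{j,k}\notin\varPi_k$ so that $p'_{j,k}\in P_{j,k}\setminus\varPi_k=P'_{j,k}$. The paper simply asserts the key step ``by the definition of 3D cube $\boldsymbol\varPi$, $p'_{j,k}\notin\varPi_k$'' without computation, whereas you make this explicit via the half-thickness inequality along the axis of maximum separation and also flag the boundary convention --- so your version is a more detailed execution of the same argument.
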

The above Lemma~\ref{lemma:p_in_tube} states that RTS procedure preserves trajectory belonging to the corresponding updated robustness tube.

\begin{lemma}
	\label{lemma:tubes}
	Let two robustness tubes $\rtube_1$ and $\rtube_2$ be $\delta$-separate. Then any pair of trajectories within these robustness tubes are non-conflicting, i.e.:
	\begin{equation}
	\forall \mathbf{p}_1\in \rtube_1,\ \forall\mathbf{p}_2\in \rtube_2,\ 
	||p_{1,k}-p_{2,k}|| \geq \delta, \, \forall k \in \{0,\dotsc,H\}.
	\end{equation}

\end{lemma}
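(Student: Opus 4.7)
The plan is a short chain of definition-unpacking, so the proof should be essentially immediate once the right notion of $\delta$-separation for discrete-time robustness tubes is made explicit.

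First, I would clarify the meaning of $\rtube_1$ and $\rtube_2$ being $\delta$-separate. By construction in Algorithm~\ref{alg:shrinking} and the preceding statement of Theorem~\ref{th:rts_success} (Equation~\eqref{eq:lnf_tube_dist}), the tubes are discrete-time objects, i.e.\ sequences $\rtube_j = (P_{j,0}, P_{j,1}, \ldots, P_{j,H})$ of 3D sets, and $\delta$-separation is understood time-step wise: $\mathbf{dist}(P_{1,k}, P_{2,k}) \ge \delta$ for every $k \in \{0,\ldots,H\}$, where $\mathbf{dist}(A,B) = \inf_{a \in A, b \in B} \|a-b\|_\infty$.

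Next, I would use the definition of a trajectory lying inside a robustness tube. For any $\mathbf{p}_j \in \rtube_j$, by Definition~\ref{def:robustnesstube} (and the fact that $\mathbf{p}_j' \in \rtube_j'$ from Lemma~\ref{lemma:p_in_tube}), we have the pointwise membership $p_{j,k} \in P_{j,k}$ for all $k$.

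Combining these two observations gives the result in one line: for each $k$,
\begin{equation*}
\|p_{1,k} - p_{2,k}\|_\infty \;\ge\; \inf_{a \in P_{1,k},\, b \in P_{2,k}} \|a - b\|_\infty \;=\; \mathbf{dist}(P_{1,k}, P_{2,k}) \;\ge\; \delta,
\end{equation*}
which is exactly the non-conflict condition from Definition~\ref{def:msep}. Since the choice of $\mathbf{p}_1$ and $\mathbf{p}_2$ inside their respective tubes was arbitrary, the claim holds for all such pairs.

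The only potentially confusing step is the first one: being explicit that $\delta$-separation of tubes is a statement at every time step rather than a single infimum over all times. Once this is fixed (which is consistent with how RTS constructs the shrunk tubes slice by slice), the rest follows immediately from the definition of set distance, and no further machinery is needed.
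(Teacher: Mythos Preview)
Your proposal is correct and follows essentially the same approach as the paper: unpack the time-step-wise definition of $\delta$-separation (Definition~\ref{def:tube_sep}), use the definition of $\mathbf{dist}$ as an infimum (Definition~\ref{def:dist}), and conclude by the elementary property that any particular value dominates the infimum. The only minor difference is expository---you are slightly more explicit about the pointwise membership $p_{j,k}\in P_{j,k}$---but the logical content is identical.
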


Using Lemma~\ref{lemma:tubes} we can now prove that every successful application of L2F combined with the shrinking process results in new trajectories does not violate previously achieved minimum separations between UAS, unless the RTS process results in an empty robustness tube. In other words, it solves the 3 UAS issue raised in Example~\ref{ex:rts}. We formalize this result in the context of 3 UAS with the following Lemma:

\begin{lemma}
	\label{lemma:3uas}
	Let $\mathbf{x}_1, \mathbf{x}_2, \mathbf{x}_3$ be pre-planned conflicting UAS trajectories, and let $\rtube_1$, $\rtube_2$ and $\rtube_3$ be their corresponding robustness tubes. 
	Without loss of generality assume that the sequential pairwise application of L2F combined with RTS has been done in the following order:
	\begin{align}
		\label{eq:prop12}
		(\mathbf{x}_1', \mathbf{x}_2') = \ltf (\mathbf{x}_1, \mathbf{x}_2, \rtube_1, \rtube_2), &\qquad
		(\rtube_1', \rtube_2') = \rts (\mathbf{x}_1, \mathbf{x}_2, \rtube_1, \rtube_2) \\
		\label{eq:prop13}
		(\mathbf{x}_1'', \mathbf{x}_3') = \ltf (\mathbf{x}_1', \mathbf{x}_3, \rtube_1', \rtube_3), &\qquad
		(\rtube_1'', \rtube_3') = \rts (\mathbf{x}_1'', \mathbf{x}_3', \rtube_1', \rtube_3) \\
		\label{eq:prop23}
		(\mathbf{x}_2'', \mathbf{x}_3'') = \ltf (\mathbf{x}_2', \mathbf{x}_3', \rtube_2', \rtube_3'), &\qquad
		(\rtube_2'', \rtube_3'') = \rts (\mathbf{x}_2'', \mathbf{x}_3'', \rtube_2', \rtube_3')
	\end{align}
	If all three L2F applications gave zero-slack solutions then position trajectories $\mathbf{p}_1'', \mathbf{p}_2'', \mathbf{p}_3''$ pairwise satisfy mutual separation requirement, e.g.:
	\begin{align}
		\label{eq:ms12}
		||p''_{1,k}-p''_{2,k}||\geq \delta,\ \forall k\in\{0,\ldots,H\} \\
		\label{eq:ms13}
		||p''_{1,k}-p''_{3,k}||\geq \delta,\ \forall k\in\{0,\ldots,H\} \\
		\label{eq:ms23}
		||p''_{2,k}-p''_{3,k}||\geq \delta,\ \forall k\in\{0,\ldots,H\}
	\end{align}
and are within their corresponding robustness tubes:
	\begin{equation}
	\label{eq:im}
		\mathbf{p}''_{j}\in\rtube''_j,\ \forall j \in \{1,2,3\}. 
	\end{equation}
\end{lemma}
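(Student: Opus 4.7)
The plan is to chain together Theorem 5.1, Lemma 5.1, and Lemma 5.2 by carefully tracking, after each of the three pairwise applications, (i) which tube each updated trajectory lives in and (ii) which pairs of tubes are already $\delta$-separate. Everything follows from bookkeeping; no new geometric estimate is needed.

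First I would unpack the consequences of the zero-slack hypothesis for each of the three steps in turn. For step \eqref{eq:prop12}: zero slack in L2F implies by Theorem~\ref{th:CAMPC_success} that $\mathbf{p}_1'$ and $\mathbf{p}_2'$ are non-conflicting and lie in $\rtube_1,\rtube_2$; the subsequent RTS yields, by Theorem~\ref{th:rts_success}, tubes $\rtube_1'\subseteq\rtube_1$, $\rtube_2'\subseteq\rtube_2$ with $\mathbf{dist}(\rtube_1',\rtube_2')\geq\delta$; and by Lemma~\ref{lemma:p_in_tube}, $\mathbf{p}_1'\in\rtube_1'$ and $\mathbf{p}_2'\in\rtube_2'$. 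I would then apply the same three consequences to steps \eqref{eq:prop13} and \eqref{eq:prop23}, obtaining in particular the nesting chain $\rtube_1''\subseteq\rtube_1'\subseteq\rtube_1$, $\rtube_2''\subseteq\rtube_2'\subseteq\rtube_2$, $\rtube_3''\subseteq\rtube_3'\subseteq\rtube_3$, together with $\mathbf{dist}(\rtube_1'',\rtube_3')\geq\delta$ and $\mathbf{dist}(\rtube_2'',\rtube_3'')\geq\delta$, and the memberships $\mathbf{p}_1''\in\rtube_1''$, $\mathbf{p}_2''\in\rtube_2''$, $\mathbf{p}_3''\in\rtube_3''$.

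The membership conclusion \eqref{eq:im} is then immediate from the three applications of Lemma~\ref{lemma:p_in_tube}. For the three mutual separation conclusions, I would reason as follows. Inequality \eqref{eq:ms23} is the direct output of step \eqref{eq:prop23} via Lemma~\ref{lemma:tubes}, since $\mathbf{p}_2''\in\rtube_2''$, $\mathbf{p}_3''\in\rtube_3''$, and these tubes are $\delta$-separate. For \eqref{eq:ms9}, I would push $\mathbf{p}_3''$ up one level: since $\rtube_3''\subseteq\rtube_3'$ we have $\mathbf{p}_3''\in\rtube_3'$, while $\mathbf{p}_1''\in\rtube_1''$, and $\rtube_1'',\rtube_3'$ were made $\delta$-separate at step \eqref{eq:prop13}, so Lemma~\ref{lemma:tubes} applies. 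Similarly for \eqref{eq:ms12}, I push both trajectories up one level: $\mathbf{p}_1''\in\rtube_1''\subseteq\rtube_1'$ and $\mathbf{p}_2''\in\rtube_2''\subseteq\rtube_2'$, and $\rtube_1',\rtube_2'$ were made $\delta$-separate at step \eqref{eq:prop12}, so Lemma~\ref{lemma:tubes} again yields the pairwise separation of $\mathbf{p}_1''$ and $\mathbf{p}_2''$ at every time step.

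The only place where any care is needed is step \eqref{eq:ms12}: one might worry that the step \eqref{eq:prop13} modification of UAS 1 (producing $\mathbf{x}_1''$) could have pushed it close to UAS 2 again, and likewise that step \eqref{eq:prop23} could have pushed UAS 2 close to UAS 1. The point of invoking the subset relations $\rtube_j''\subseteq\rtube_j'$ is precisely to rule this out: because RTS only shrinks tubes, any further L2F repair is confined inside the $\delta$-separated tubes produced at the first step, and so the previously established separation between $\mathbf{p}_1$ and $\mathbf{p}_2$ cannot be undone. This is exactly the live-lock avoidance property that motivated the shrinking procedure, so I expect no further obstacle beyond this observation.
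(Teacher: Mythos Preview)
Your proposal is correct and follows essentially the same argument as the paper: establish the nesting $\rtube_j''\subseteq\rtube_j'$ from Theorem~\ref{th:rts_success}, the memberships $\mathbf{p}_j''\in\rtube_j''$ from Lemma~\ref{lemma:p_in_tube}, and then invoke Lemma~\ref{lemma:tubes} on the $\delta$-separate tubes at the appropriate level for each pair. The only differences are cosmetic: the paper derives \eqref{eq:ms23} directly from Theorem~\ref{th:CAMPC_success} rather than routing through Lemma~\ref{lemma:tubes}, and you have a typo (\texttt{eq:ms9} should read \texttt{eq:ms13}).
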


By induction we can extend Lemma~\ref{lemma:3uas} to any number of UAS. Therefore, we can conclude that for any $N$ pre-planned UAS trajectories, zero-slack solution of LNF is a sufficient condition for CA, e.g. resulting trajectories generated by LNF are non-conflicting and withing the robustness tubes of the initial trajectories. Note that this approach can still fail to find a solution, especially as repeated RTS can result in empty robustness tubes.

\begin{theorem}
\label{thm:LNF_termination}
For the case of $N$ UAS, when applied at any time step $k$, LNF (algorithm \ref{alg:lnf}) terminates after no more than $N \choose 2$ applications of pairwise L2F (algorithm \ref{alg:l2f}). 
\end{theorem}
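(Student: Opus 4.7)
The plan is to bound the total number of pairwise \textbf{L2F} invocations by counting distinct unordered pairs of UAS, of which there are exactly $\binom{N}{2}$, and arguing that LNF processes each pair at most once. I would fix the time step at which LNF is invoked and let $\mathcal{C}$ denote the set of UAS pairs that are in conflict over the $H$-step look-ahead. Algorithm~\ref{alg:lnf} issues a pairwise \textbf{L2F} call only for pairs in $\mathcal{C}$ (the outer loop ranges over the detected conflict set $S$), so it suffices to show that each call strictly decreases $|\mathcal{C}|$ and that resolved pairs never re-enter $\mathcal{C}$ during the same LNF invocation.

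The core of the argument is a monotonicity invariant on the robustness tubes. Suppose \textbf{L2F} is applied to a conflicting pair $(i,j)$ with a zero-slack outcome, and \textbf{RTS} is executed immediately afterwards. By Theorem~\ref{th:rts_success}, the updated tubes satisfy both $\mathbf{dist}(\rtube_i', \rtube_j') \geq \delta$ and $\rtube_i' \subseteq \rtube_i$, $\rtube_j' \subseteq \rtube_j$. Any later RTS call triggered by another conflict involving $i$ or $j$ only further removes a box from these tubes, hence can only shrink them, and so preserves the already-established $\delta$-separation for the pair $(i,j)$. By Lemma~\ref{lemma:tubes}, any pair of trajectories drawn from $\delta$-separate tubes is non-conflicting; consequently $(i,j)$ is permanently removed from $\mathcal{C}$ within this LNF invocation. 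Combining the two observations, each pairwise L2F call reduces $|\mathcal{C}|$ by one, and since $|\mathcal{C}| \leq \binom{N}{2}$ at the start, LNF terminates after at most $\binom{N}{2}$ L2F applications.

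The main obstacle I expect is the case where \textbf{L2F} does not return a zero-slack solution, so that Theorem~\ref{th:rts_success} cannot directly certify $\delta$-separation of the updated tubes. I would handle this by observing that the termination bound is really a statement about iteration count rather than correctness: regardless of whether L2F succeeds, Algorithm~\ref{alg:lnf} visits each pair at most once through its outer loop over the (initially detected) conflict set, so the $\binom{N}{2}$ bound holds purely combinatorially; the monotonicity invariant above then upgrades this to a genuine correctness guarantee whenever L2F achieves zero slack. A secondary subtlety worth flagging is that Algorithm~\ref{alg:lnf} is executed per UAS in a distributed manner, so the counting argument implicitly assumes that each pair $(i,j)$ is processed once globally---for instance, by coordination among the two endpoints, or by the fact that both endpoints solve identical deterministic problems---rather than once by each endpoint, which would otherwise merely double the bound to $N(N-1)$.
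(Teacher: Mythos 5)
Your proof is correct and takes essentially the same route as the paper, which disposes of Theorem~\ref{thm:LNF_termination} in one sentence by citing the inductive application of Lemma~\ref{lemma:3uas}; your monotonicity invariant (tubes only shrink under repeated RTS, so $\delta$-separation of a resolved pair persists, and Lemma~\ref{lemma:tubes} then keeps that pair out of the conflict set) is precisely the content of that lemma unrolled. The two caveats you flag --- the non-zero-slack case and the per-UAS distributed execution of Algorithm~\ref{alg:lnf} --- are real gaps the paper leaves implicit, and your combinatorial fallback (a single pass over the detected conflict set bounds the iteration count regardless of success) is the right way to close the first of them.
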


This result follows directly from the inductive application of Lemma~\ref{lemma:3uas}. In experimental evaluations (Section \ref{sec:experiments_lnf}), we see that this worst-case number of L2F applications is not required often in practice. 

%

\section{Experimental evaluation of L2F and LNF}
\label{sec:exp}
In this section, we show the performance of our proposed methods via extensive simulations, as well as an implementation for actual quad-rotor robots. We compare L2F and L2F with repair (L2F+Rep) with the MILP formulation of Section \ref{sec:CA} and two other baseline approaches. Through multiple case studies, we show how LNF extends the L2F framework to work for scenarios with than two \ac{UAS}.

\subsection{Experimental setup}

\noindent \textbf{Computation platform:} All the simulations were performed on a computer with an AMD Ryzen 7 2700 8-core processor and 16GB RAM, running Python 3.6 on Ubuntu 18.04. 

\noindent \textbf{Generating training data:} We have generated the data set of 14K trajectories for training with collisions between UAS using the trajectory generator in \cite{mueller2015computationally}. The look-ahead horizon was set to $T=4$s and $dt=0.1$s. Thus, each trajectory consists of $H+1=41$ time-steps. The initial and final waypoints were sampled uniformly at random from two 3D cubes close to the fixed collision point, initial velocities were set to zero.

\noindent \textbf{Implementation details for the learning-based conflict resolution:}
The MILP to generate training data for the supervised learning of the CR scheme was implemented in MATLAB using Yalmip \cite{lofberg2004yalmip} with MOSEK v8 as the solver. The learning-based CR scheme was trained for $\rho = 0.055$ and minimum separation $\delta = 0.1$m which is close to the lower bound in Assumption \ref{assumption1}. This was implemented in Python 3.6 with Tensorflow 1.14 and Keras API and Casadi with qpOASES as the solver.  For traning the LSTM models (with different architectures) for CR, the number of training epochs was set to 2K with a batch size of 2K. Each network was trained to minimize categorical cross-entropy loss using Adam optimizer~\cite{kingma2014adam} with training rate of $\alpha=0.001$ and moment exponential decay rates of $\beta_1= 0.9$ and $\beta_2=0.999$.
The model with 3 LSTM layers with 128 neurons each, see Figure~\ref{fig:lstm_arc}, was chosen as the default learning-based CR model, and is used for the pairwise \ac{CA} approach of both L2F and LNF.

\noindent \textbf{Implementation details for the CA-MPC:}
For the online implementation of our scheme, we implement CA-MPC using CVXgen and report the computation times for this implementation. 
We then import CA-MPC in Python, interface it with the CR scheme and run all simulations in Python.


\subsection{Experimental evaluation of L2F}
\label{sec:experiments_l2f}

\begin{figure}[tb]
	\centering
	\begin{minipage}{.65\textwidth}
		\centering
		\includegraphics[width=\linewidth]{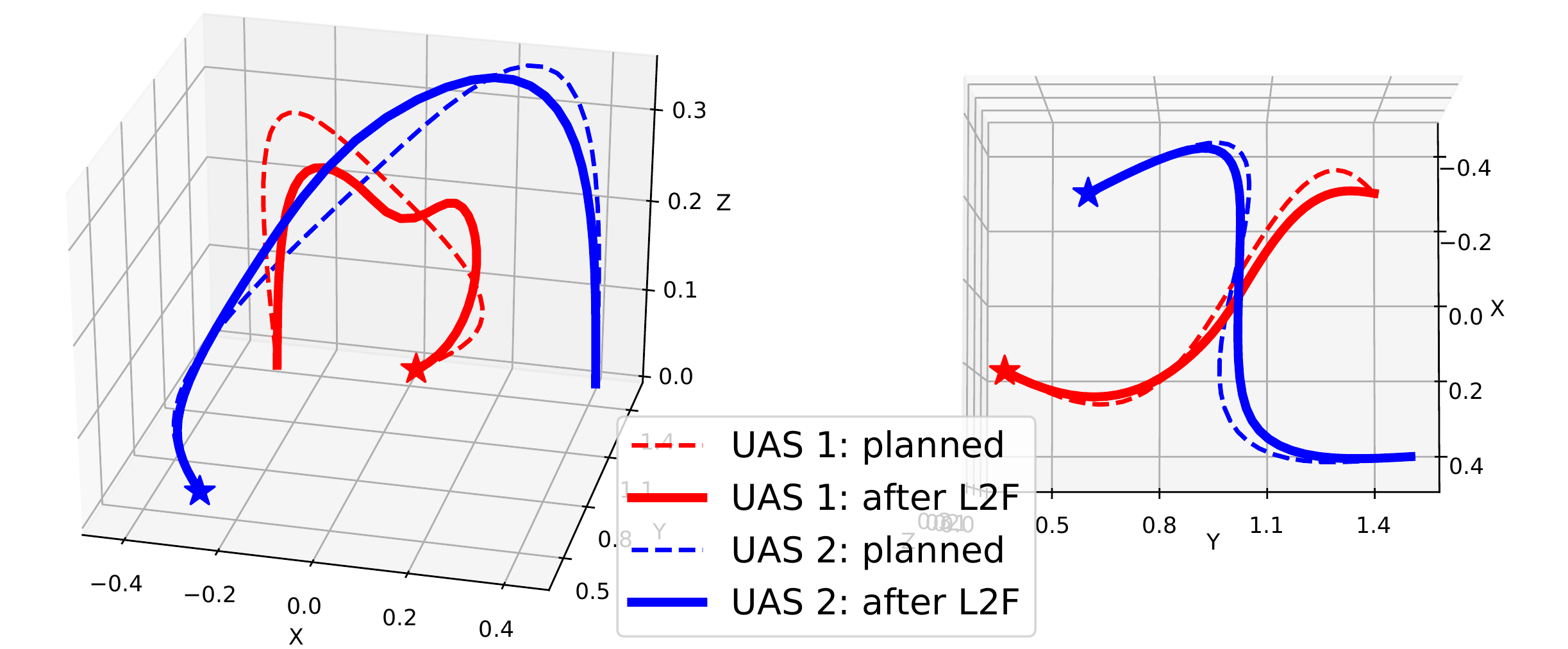}
		\captionof{figure}{Trajectories for 2 UAS from different angles. The dashed (planned) trajectories have a collision at the halfway point. The solid ones, generated through L2F method, avoid the collision while remaining within the robustness tube of the original trajectories. Initial UAS positions marked as stars. Playback of the scenario is at \url{https://tinyurl.com/l2f-exmpl}.}
		\label{fig:scen1_w_ca}
	\end{minipage}%
	\hfill
	\begin{minipage}{.32\textwidth}
		\centering
		\includegraphics[width=\linewidth]{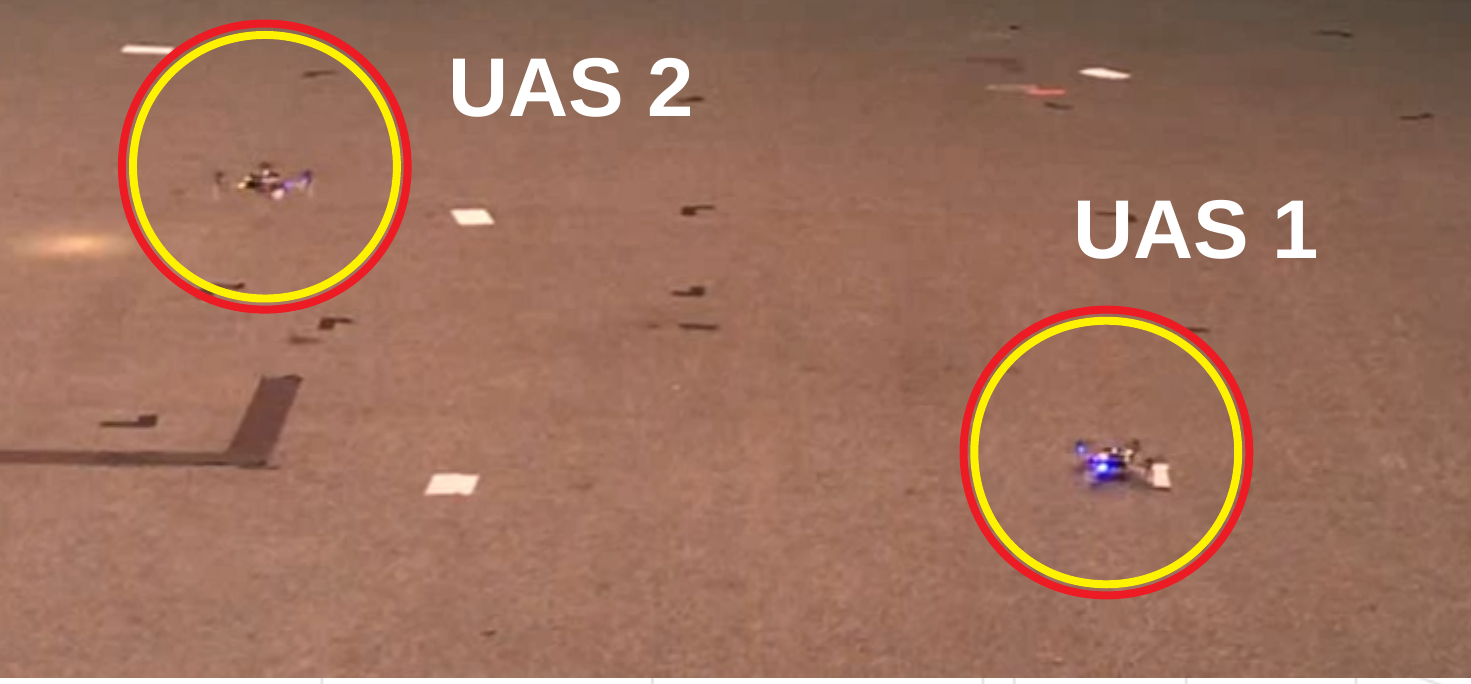}
		\includegraphics[width=\linewidth]{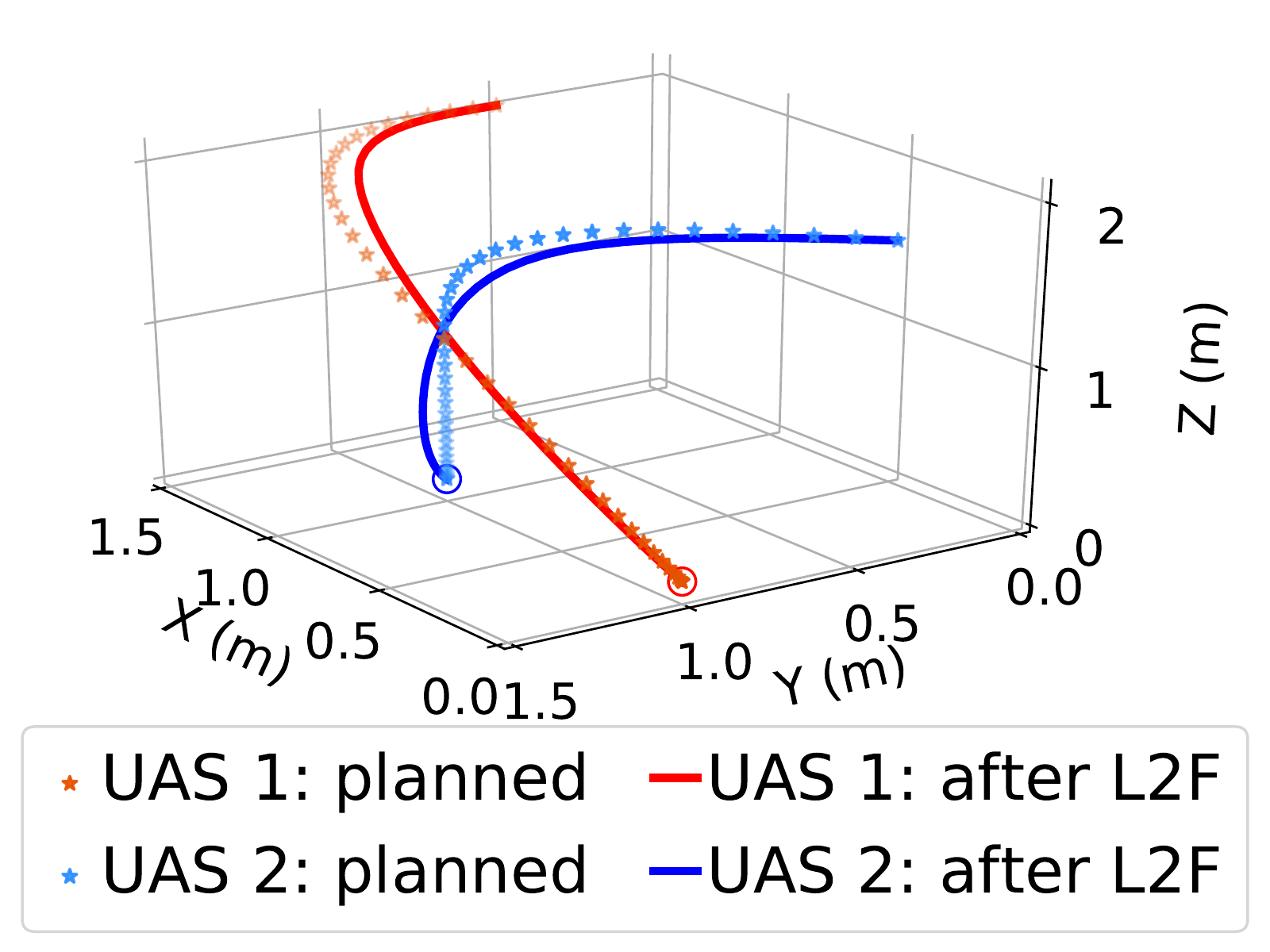}\vspace{-10pt}
		\captionof{figure}{Trajectories for 2 Crazyflie quad-rotors before (dotted) and after (solid) L2F. 
			Videos of this are at \url{https://tinyurl.com/exp-cf2}}
		\label{fig:crazyfly}
	\end{minipage}
\end{figure}

We evaluate the performance of L2F with 10K test trajectories (for pairwise \ac{CA}) generated using the same distribution of start and end positions as was used for training. Figure~\ref{fig:scen1_w_ca} shows an example of two UAS trajectories before and after L2F. Successful avoidance of the collision at the midway point on the trajectories can easily be seen on the playback of the scenario available at \url{https://tinyurl.com/l2f-exmpl}.
To demonstrate the feasibility of the deconflicted trajectories, we also ran experiments using two Crazyflie quad-rotor robots as shown in Figure~\ref{fig:crazyfly}.  
Videos of the actual flights and additional simulations can be found at \url{https://tinyurl.com/exp-cf2}.

\subsubsection{Results and comparison to other methods}
\label{sec:exp_results}

We analyzed three other methods alongside the proposed learning-based approach for L2F.
\begin{enumerate}
	\item A \textbf{random} decision approach which outputs a 
	sequence sampled from the discrete uniform distribution.
	\item A \textbf{greedy} approach that selects the discrete decisions that correspond to the direction of the most separation between the two UAS at each time step. 
	For more details see \cite{itsc20}.
	
	\item A \textbf{L2F with Repairing} approach following Section~\ref{sec:repair}.
	\item A centralized \textbf{MILP} solution that picks decisions corresponding to binary decision variables in \eqref{eq:CentralMILP}.
\end{enumerate}


\begin{figure}[t !]
	\begin{subfigure}[b]{0.45\columnwidth}
		\includegraphics[width=\textwidth]{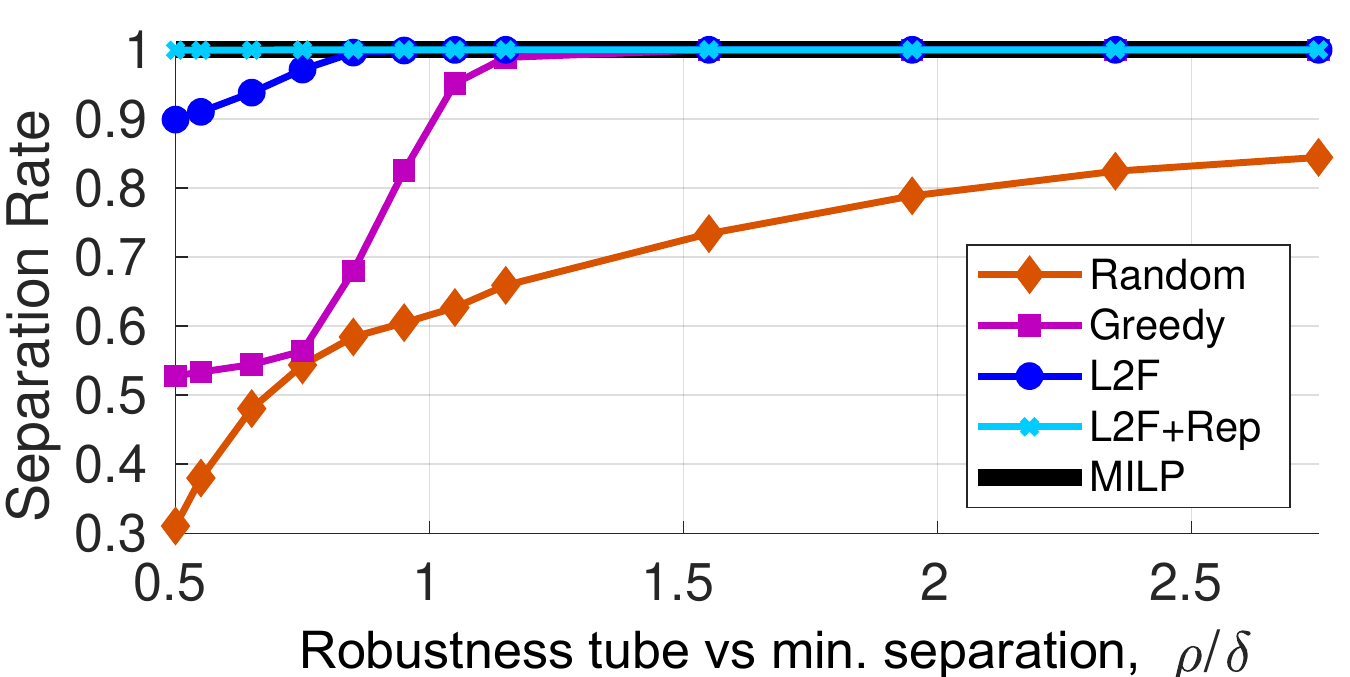}
		\caption{\textit{Separation rate} defines the fraction of  initially conflicting trajectories for which UAS managed to achieve minimum separation.}
	\end{subfigure}
	\hspace{10pt}
	\begin{subfigure}[b]{0.45\columnwidth}
		\includegraphics[width=\textwidth]{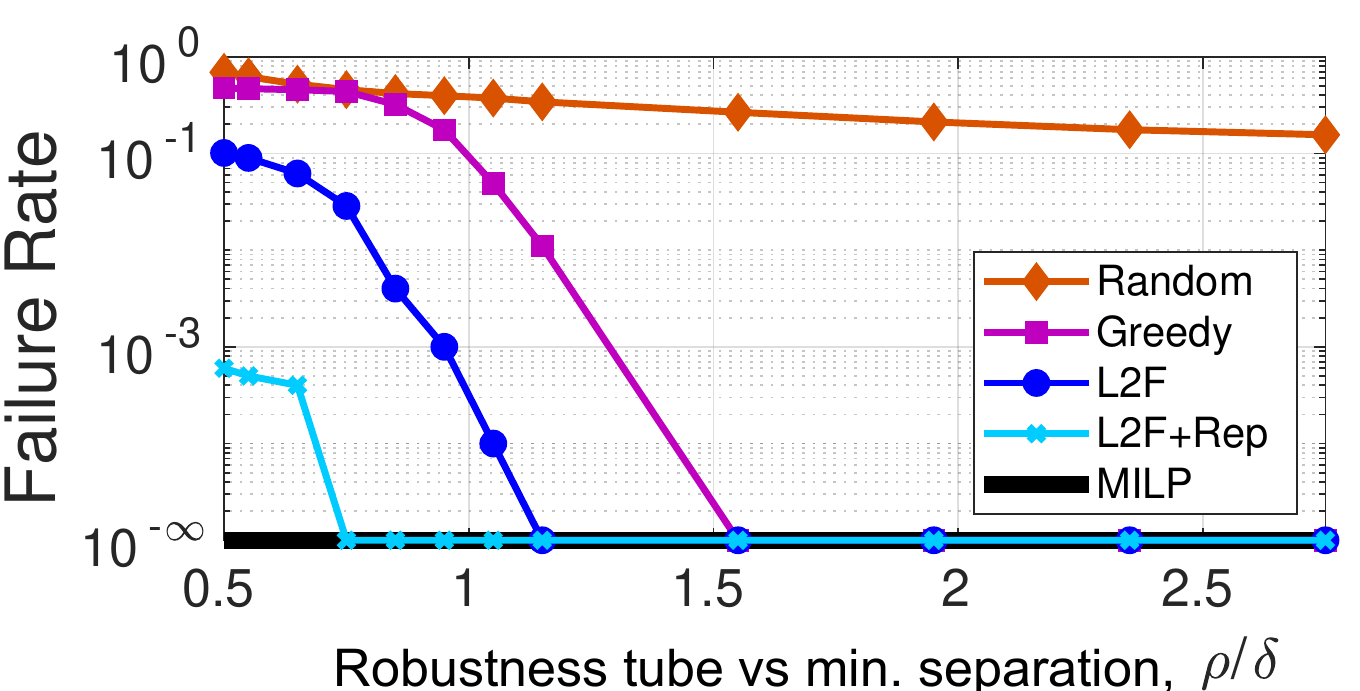}
		\caption{\textit{Failure rate} (1-\textit{Separation rate}) defines the fraction of initially conflicting trajectories for which UA could not achieve minimum separation.}
	\end{subfigure}
	\vspace{-5pt}
	\caption{\small Model sensitivity analysis with respect to variations of fraction $\rho/\delta$ which connects the minimum allowable robustness tube radius $\rho$ to the minimum allowable separation between two UAS $\delta$, see Assumption~\ref{assumption1}. A higher $\rho/\delta$ implies there is more room within the robustness tubes to maneuver for CA.}
	\label{fig:rho_rate_33}
\end{figure}

For the evaluation, we measured and compared the \textbf{separation rate} and the \textbf{computation time} for all the methods over the same 10K test trajectories. \textit{Separation rate} defines the fraction of the conflicting trajectories for which UAS managed to achieve minimum separation after a \ac{CA} approach. Figure~\ref{fig:rho_rate_33} shows the impact of the $\rho/\delta$ ratio on separation rate. Higher $\rho/\delta$ implies wider robustness tubes for the UAS to maneuver within, which should make the CA task easier as is seen in the figure. %
%
The centralized \ac{MILP} has a separation rate of $1$ for each case here, however is unsuitable for an online implementation with its computation time being over a minute ( seetable \ref{tbl:sep-rate-time}) and we exclude it from the comparisons in the text that follows. In the case of $\rho/\delta=0.5$, where the robustness tubes are just wide enough to fit two UAS (see Assumption~\ref{assumption1}), we see the L2F with repairing (L2F+Rep) significantly outperforms the methods. This worst-case performance of L2F with repairing is $0.999$ which is significantly better than the other approaches including the original L2F. As the ratio grows, the performance of all methods improve, with L2F+Rep still outperforming the others and quickly reaching a separation rate of $1$. For $\rho/\delta \geq 1.15$, L2F no longer requires any repair and also has a separation rate of $1$. 
 
Table \ref{tbl:sep-rate-time} shows the separation rates for three different $\rho/\delta$ value as well as the computation times (mean and standard deviation) for each CA algorithm. L2F and L2F+Rep have an average computation time of less than $10$ms, making them suited for an online implementation even at our chosen control sampling rate of $10$Hz. For all CA schemes excluding MILP, the smaller the $\rho/\delta$ ratio, the more UAS 1 alone is unsuccessful at collision avoidance MPC~\eqref{eq:drone1mpc}, and UAS 2 must also solve its CA-MPC~\eqref{eq:drone2mpc} and deviate from its pre-planned trajectory.
Therefore, computation time is higher for smaller $\rho/\delta$ ratio and lower for higher $\rho/\delta$ values. A similar trend is observed for the MILP, even though it jointly solves for both UAS, showing that it is indeed harder to find a solution when the $\rho/\delta$ ratio is small.

\begin{table}[]
	\renewcommand{\arraystretch}{1.3}
	\setlength{\tabcolsep}{1.5pt}
	\centering
	\begin{tabular}{l|c||c|c|c|c|c|}
		\cline{3-7}
		\multicolumn{1}{c}{\textbf{}} &
		\multirow{2}{*}{\textbf{}} &
		\multicolumn{5}{c|}{\textbf{CA Scheme}} \\ \cline{3-7} 
		\multicolumn{1}{c}{\textbf{}} &
		&
		\textbf{Random} &
		\textbf{Greedy} &
		\textbf{L2F} &
		\textbf{L2F+Rep} &
		\textbf{MILP} \\ \hline\hline
		\multicolumn{1}{|l|}{\multirow{3}{*}{\textbf{Separation rate}}} 
		&
		$\boldsymbol{\rho}/\boldsymbol{\delta}=\textbf{0.5}$ 
		&0.311
		&0.528
		&0.899
		&0.999
		&1
		\\ \cline{2-7} 
		\multicolumn{1}{|l|}{} 
		& $\boldsymbol{\rho}/\boldsymbol{\delta}=\textbf{0.95}$ 
		&0.605 
		&0.825
		&0.999
		& 1 
		& 1 \\ \cline{2-7} 
		\multicolumn{1}{|l|}{} &
		$\boldsymbol{\rho}/\boldsymbol{\delta}=\textbf{1.15}$ 
		& 0.659 & 0.989 & 1 & 1 & 1 \\ \hline \hline
		\multicolumn{1}{|l|}{\multirow{3}{*}{
				\begin{tabular}[c]{@{}l@{}}\textbf{Comput. time}  (ms) \\ (mean $\pm$ std)\end{tabular}
			}} 
		&
		$\boldsymbol{\rho}/\boldsymbol{\delta}=\textbf{0.5}$ 
		&$7.9\pm 0.01$ 
		&$9.7\pm0.6$
		&$9.1\pm1.3$
		&$9.7\pm3.6 $
		&$(98.9\pm 44.9)\cdot 10^3$
		\\ \cline{2-7} 
		\multicolumn{1}{|l|}{} 
		& $\boldsymbol{\rho}/\boldsymbol{\delta}=\textbf{0.95}$ 
		&$ 7.5\pm0.01$
		&$ 9.3\pm0.5$
		&$8.7\pm0.5$
		&$8.7\pm0.5$
		&$(82.5\pm36.3)\cdot 10^3$  \\ \cline{2-7} 
		\multicolumn{1}{|l|}{} &
		$\boldsymbol{\rho}/\boldsymbol{\delta}=\textbf{1.15}$ 
		& $6.3\pm1.9$
		& $7.1\pm2.$
		&$8.6\pm0.5$
		& $8.7\pm0.4$ 
		& $(33.1\pm34.9)\cdot 10^3$ \\ \hline
	\end{tabular}
	\caption{Separation rates and computation times (mean and standard deviation) comparison of different CA schemes. \textit{Separation rate} is the fraction of conflicting trajectories for which separation requirement
	\eqref{eq:msep} is satisfied after CA. \textit{Computation time} estimates the overall time demanded by CA scheme. MILP reports the time spent on solving~\eqref{eq:CentralMILP}. Other CA schemes report time needed for CR and CA-MPC together. L2F with repairing includes repairing time as well.}
	\label{tbl:sep-rate-time}
\end{table}

\subsection{Experimental evaluation of LNF}
\label{sec:experiments_lnf}
Next, we carry out simulations to evaluate the performance of LNF, especially in terms of scalability to cases with more than two \ac{UAS} and analyze its performance in wide variety of settings.
\subsubsection{Case study 1: Four UAS position swap}
\label{sec:case_pos_swap}
We recreate the following experiment from \cite{alonso2015collision}.
Here, two pairs of UAS must maneuver to swap their positions, i.e. the end point of each UAS is the same as the starting position for another UAS. See the 3D representation of the scenario in Figure~\ref{fig:pos_swap_3d}(a). Each UAS start set is assumed to be a singular point fixed at:
\begin{equation}
\textit{Goal}_1 = (1, 0, 0),\ \textit{Goal}_2= (0, 1, 0),\ \textit{Goal}_3=(-1, 0, 0), \ \textit{Goal}_4=(0, -1, 0)
\end{equation}
and goal states are antipodal to the start states:
\begin{equation}
\textit{Start}_j=-\textit{Goal}_j,\ \forall j\in\{1,2,3,4\}.
\end{equation}

All four UAS must reach desired goal states within 4 seconds while avoiding each other. With a pairwise separations requirement of at least $\delta=0.1$ meters, the overall mission specification is:
\begin{equation}
\label{eq:pos_mission}
\varphi_{\textit{mission}} = \bigwedge_{j=1}^4 \eventually_{[0,4]} (\mathbf{p}_j \in \textit{Goal}_j)\ \wedge\ 
\bigwedge_{j\not=j'} \always_{[0,4]}||\mathbf{p}_j-\mathbf{p}_{j'}||\geq 0.1
\vspace{-3pt}
\end{equation}


Following Section~\ref{sec:problem_planning}, initial pre-planning is done by ignoring the mutual separation requirement in \eqref{eq:pos_mission} and
generating the trajectory for each UAS $j=\{1,2,3,4\}$ independently with respect to its individual STL specification:
\begin{equation}
\varphi_j =  
\eventually_{[0,4]} (\mathbf{p}_j \in \textit{Goal}_j).
\end{equation}
Obtained pre-planned trajectories 
contain a joint collision that happens simultaneously (at $t=2$s, see Figure~\ref{fig:pos_swap_dist}) across all four UAS and located at point $(0,0,0)$, see Figure~\ref{fig:pos_swap_3d}(b).
For LNF experimental evaluation, the robustness value was fixed at $\rho=0.055$ and the UAS priorities were set in the increasing order, e.g. UAS with a lower index has the lower priority: $1<2<3<4$.

\begin{figure}[t!]
	\centering
	\includegraphics[width=.9\textwidth]{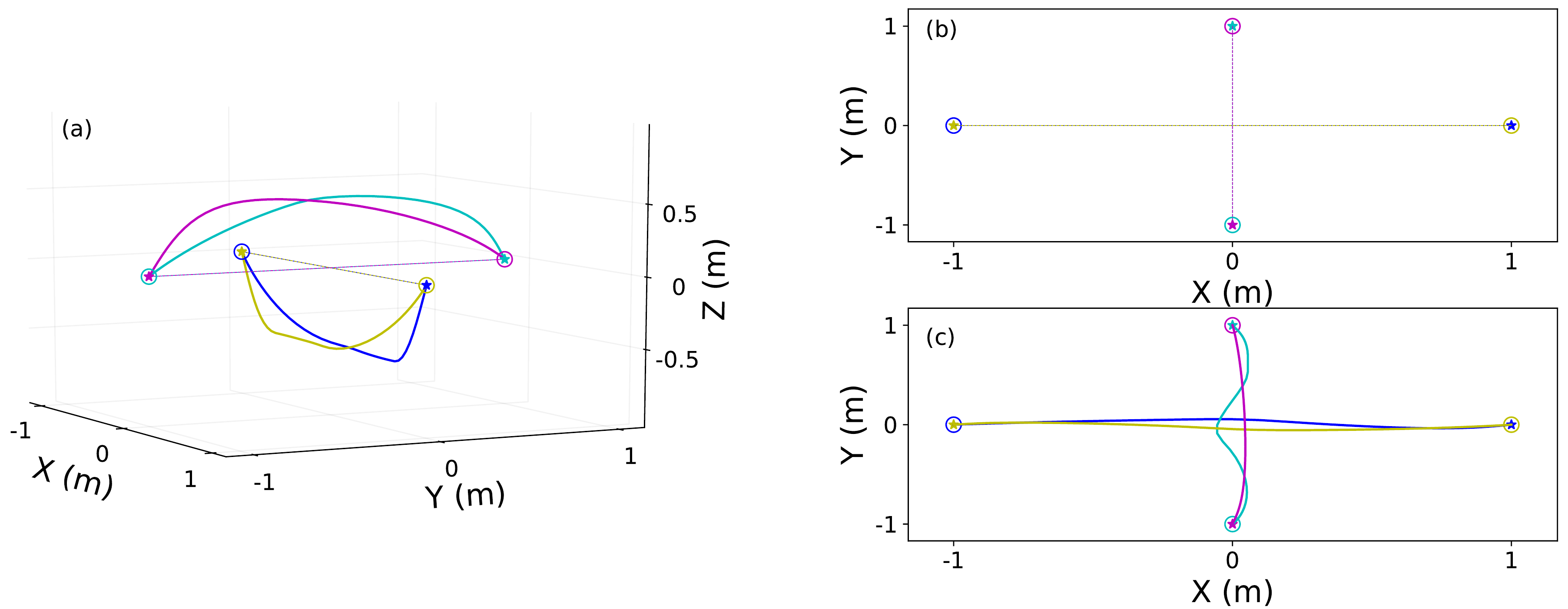}
	\caption{Four UAS position swap. (a): 3D representation of the scenario. (b)-(c): 2D projections of the scenario onto the horizontal plane $XoY$ before and after collision avoidance. Initial colliding trajectories are depicted in dashed lines in (a) and (b). Collision is detected at point $(0,0,0)$, it involves all four UAS and happens simultaneously across the agents. The updated non-colliding trajectories generated by LNF are depicted in solid color in (a) and (c). Initial positions of UAS marked by ``O'' and final positions by ``$\star$''.}
	\label{fig:pos_swap_3d}
\end{figure}


\begin{figure}[t!]
	\includegraphics[width=.95\textwidth]{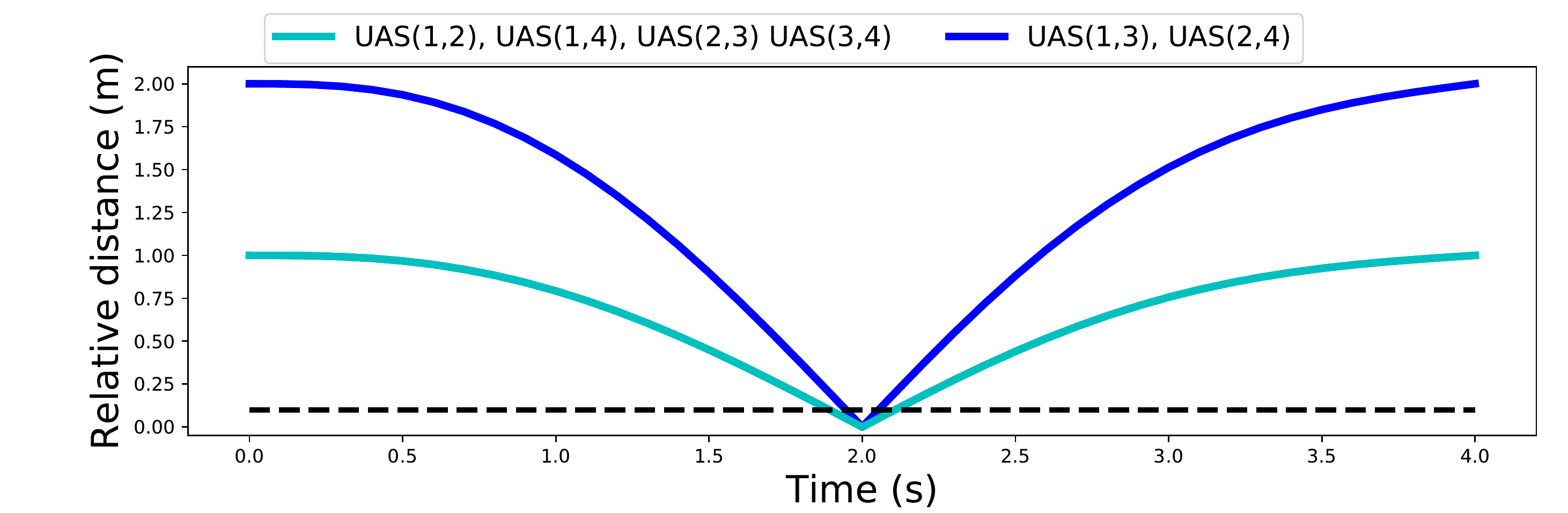}\\
	\includegraphics[width=.95\textwidth]{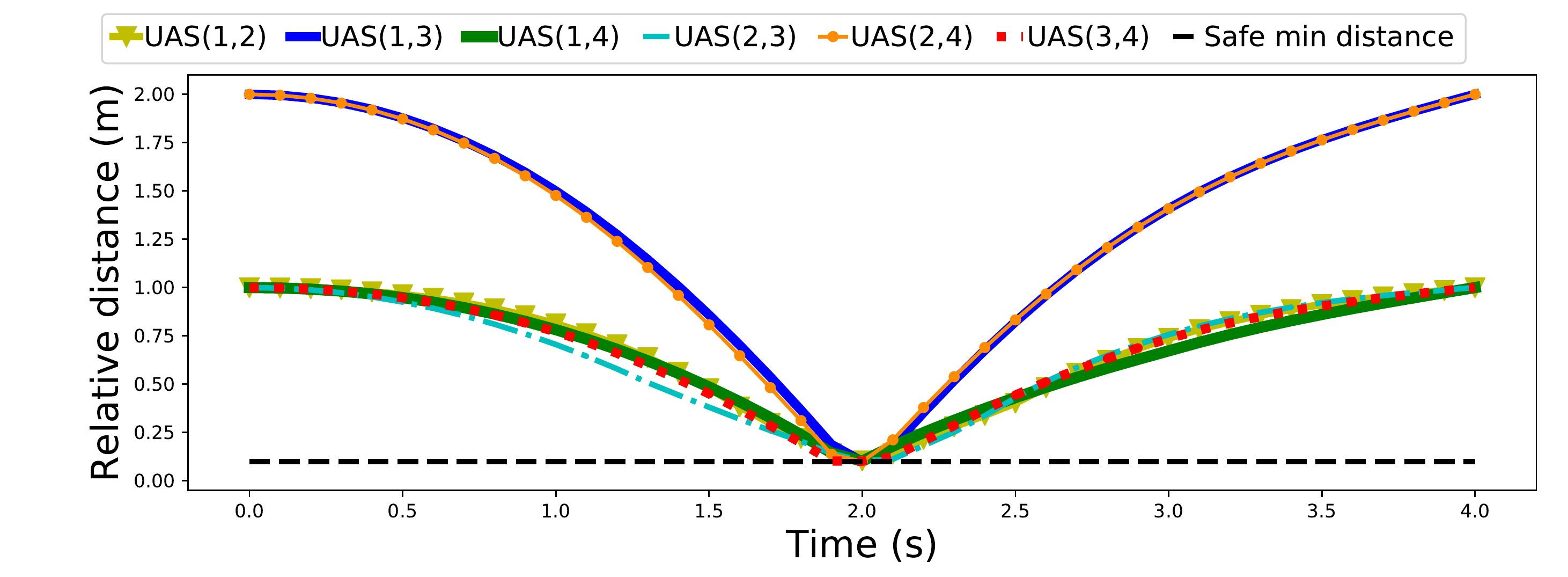}
	\caption{\small Four UAS position swap: Relative distances before (top) and after (bottom) the collision avoidance algorithm. Initial simultaneous collisions across all four UAS are successfully resolved by LNF. Note that the symmetry in the initial positions and trajectories results in multiple UAS pairs with the same relative distances for the time horizon of interest before collision avoidance (top).}
	\label{fig:pos_swap_dist}
	\vspace{15pt}
\end{figure}

\textbf{Simulation results.}
The non-colliding trajectories generated by LNF are depicted in
Figure~\ref{fig:pos_swap_3d}(c).
Playback of the scenario can be found at 
\url{https://tinyurl.com/swap-pos}.

It is observed that the opposite UAS pairs chose to change attitude and pass over each other, see Figure~\ref{fig:pos_swap_3d}(a). Within these opposite pairs, UAS chose to have horizontal deviations to avoid collision, see Figure~\ref{fig:pos_swap_3d}(c).
LNF algorithm performed $4 \choose 2$$=6$ pairwise applications of L2F (see Theorem~\ref{thm:LNF_termination}). 
Such high number of applications is expected due to a complicated simultaneous nature of the detected collision across the initially pre-planned trajectories. 
No CR repairing was required to successfully produce non-colliding trajectories by the LNF algorithm.
It took LNF $37.8$ms to perform CA. 
Figure~\ref{fig:pos_swap_dist} represents relative distances between UAS pairs before and after collision avoidance. Figure~\ref{fig:pos_swap_dist} shows that none of the UAS cross the safe minimum separation threshold of $0.1$m after LNF, e.g. joint collision has been successfully resolved by LNF.





\subsubsection{Case study 2: Four UAS reach-avoid mission}
\label{sec:case_reach_avoid}

Figure~\ref{fig:scenario_wall} depicts a multi UAS case-study with a 
reach-avoid mission. Scenario consists of four UAS which must 
reach desired goal states within 4 seconds while avoiding the wall 
obstacle and each other. Each UAS $j\in\{1,\ldots,4\}$ specification can be defined as:
\begin{equation}
\label{eq:scenario_spec_d}
\varphi_j =  
\eventually_{[0,4]} (\mathbf{p}_j \in \textit{Goal}_j)
\ \wedge\ 
\always_{[0,4]} \neg (\mathbf{p}_j \in \textit{Wall})
\vspace{-2pt}
\end{equation}
A pairwise separations requirement of $\delta=0.1$ meters is enforced for 
all UAS, therefore, the overall mission specification is:
\vspace{-2pt}
\begin{equation}
\label{eq:case_mission}
\varphi_{\text{mission}} = \bigwedge_{j=1}^4 \varphi_j\ \wedge\ 
\bigwedge_{j\not=j'} \always_{[0,4]}||\mathbf{p}_j-\mathbf{p}_{j'}||\geq 0.1
\vspace{-3pt}
\end{equation}


\begin{figure}[tb]
	\begin{subfigure}[b]{0.6\columnwidth}
		\includegraphics[width=\textwidth]{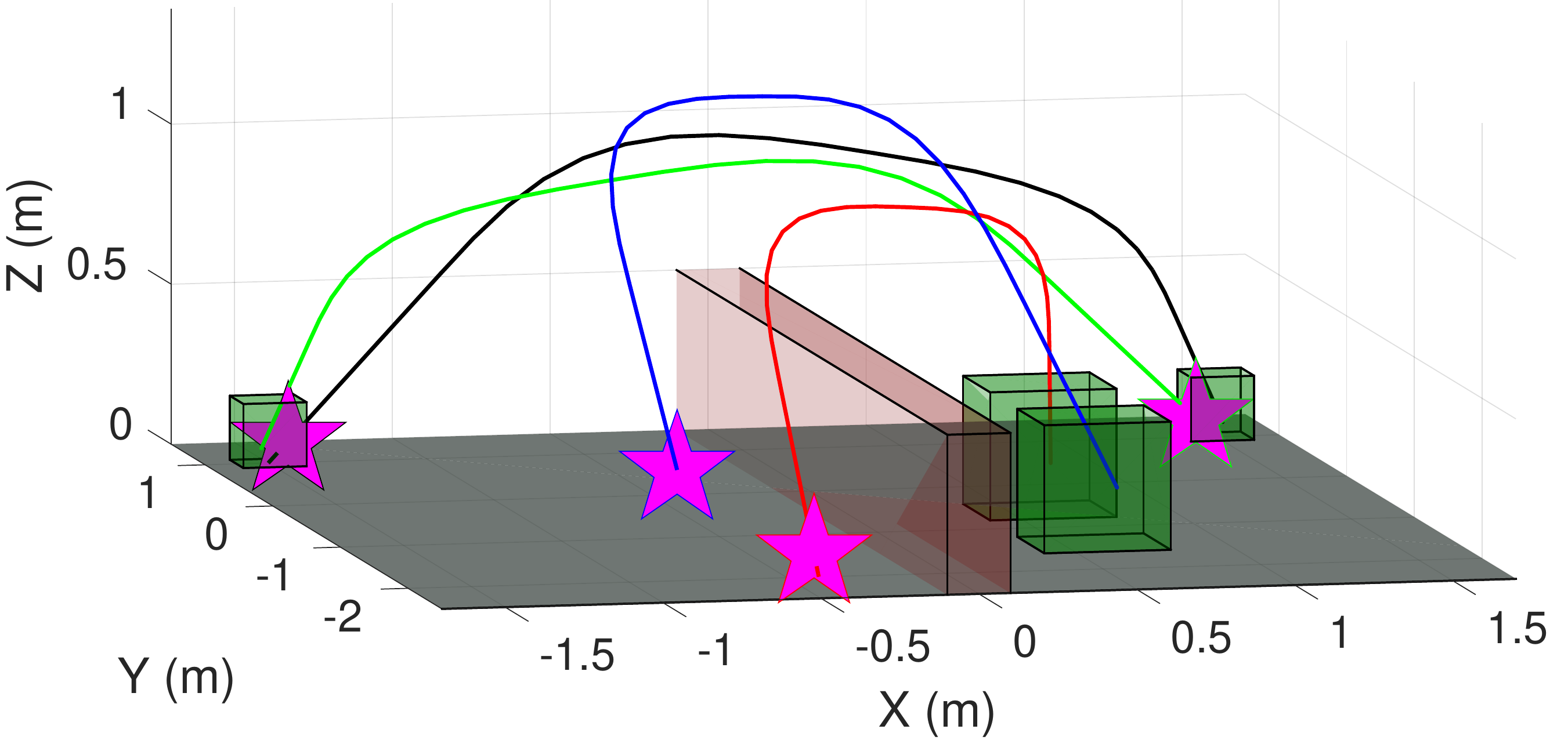}
		\caption{3D representation of the scenario}
		\label{fig:w11}
	\end{subfigure}
	\begin{subfigure}[b]{0.29\columnwidth}
		\includegraphics[width=\textwidth]{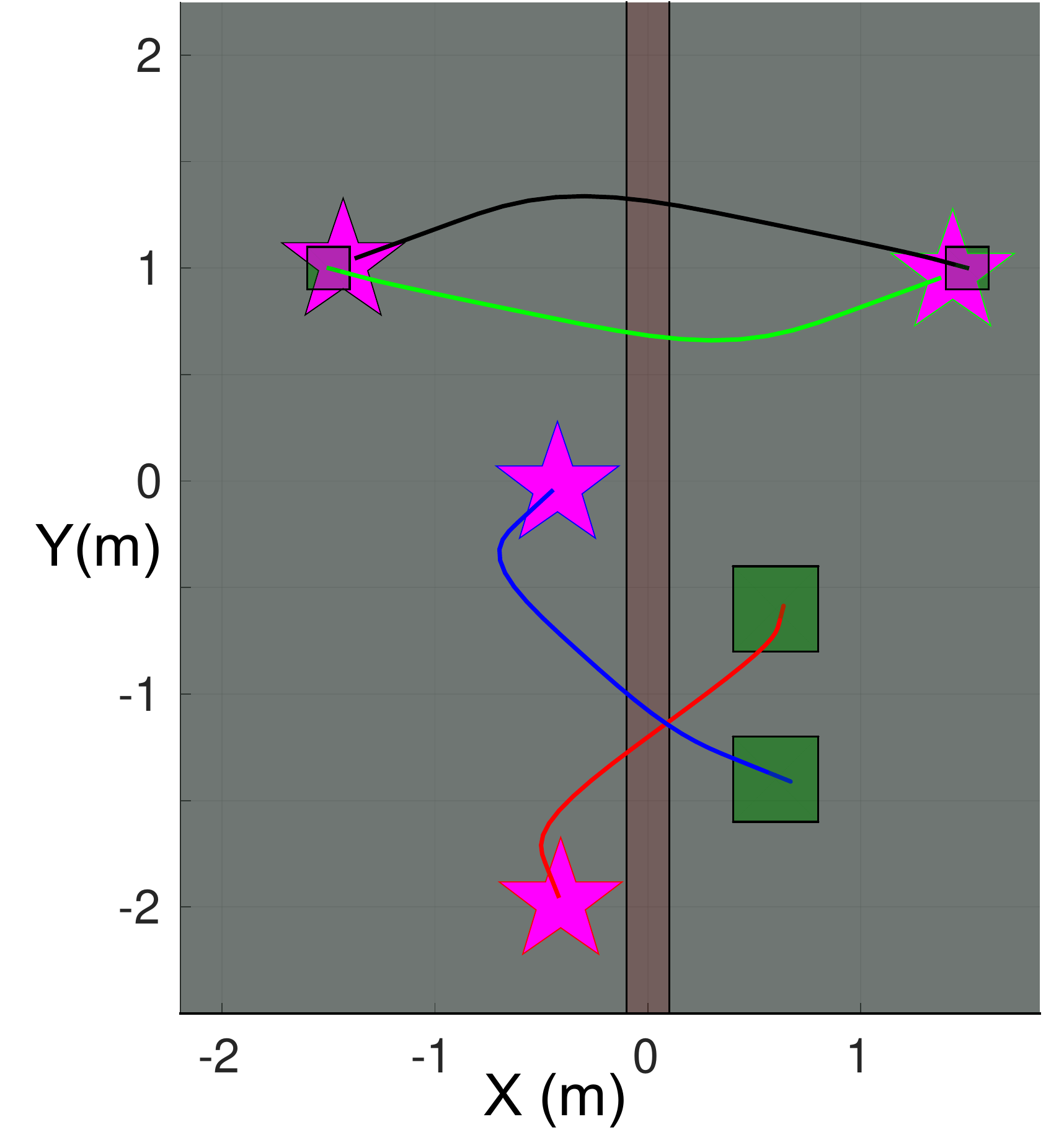}
		\caption{2D projection onto $XoY$}
		\label{fig:w22}
	\end{subfigure}
	\vspace{-10pt}
	\caption{Reach-avoid mission.
		Non-colliding trajectories for 4 UAS generated by LNF. All UAS reach their goal sets 
		(green boxes) within 4 seconds, do not crash into the vertical wall (in red) and satisfy pairwise separation requirement 
		of $0.1$m. Initial UAS positions marked by magenta ``$\star$''. Simulations are available at 
		\url{https://tinyurl.com/reach-av}.}
	\label{fig:scenario_wall}
	\vspace{-10pt}
\end{figure}

%

First, we solved the planning problem for all four UAS in a centralized manner following 
approach from~\cite{pant2018fly}. 
Next, we solved the planning problem for each UAS $j$ and its specification $\varphi_j$ independently, with calling LNF on-the-fly, after planning is complete. This way, independent planning with the online collision avoidance scheme guarantees the satisfaction of the overall mission specification \eqref{eq:case_mission}.

\textbf{Simulation results.}
We have simulated the scenario for 100 different initial conditions.
Computation time results are presented in Table~\ref{tbl:wall}.
The average computation time to generate trajectories in a centralized manner was $0.35$ seconds. The average time per UAS when planning independently (and in parallel) was $0.1$ seconds. 
These results demonstrate a speed up of $3.5\times$ for the individual UAS planning versus centralized \cite{pant2018fly}.
Scenario simulations are available  \url{https://tinyurl.com/reach-av}.

\begin{table}[t]
	\vspace{-10pt}
	\renewcommand{\arraystretch}{1.3}
	\setlength{\tabcolsep}{2.5pt}
	\centering
	\begin{tabular}{l|c|c|c}
		\multicolumn{1}{c|}{\textbf{}} &
		\multirow{2}{*}{Centralized planning \cite{pant2018fly}} &
		\multicolumn{2}{c}{Decentralized planning with CA} \\ \cline{3-4} 
		&
		&
		\multicolumn{1}{l|}{Independent planning} &
		{CA with LNF} \\ \hline
		Comput. time (mean$\pm$ std)(ms)
		&345.8$\pm$ 87.2
		& 138.6$\pm$ 62.4 & 
		9.97 $\pm$ 0.4
	\end{tabular}
	\caption{Reach-avoid mission. Computation times (mean and standard deviation) comparison between centralized planning following~\cite{pant2018fly} and decentralized planning (independent planning with LNF) over 100 runs of the scenario.}
	\label{tbl:wall}
\end{table}





%

\subsubsection{Case study 3: UAS operations in high-density airspace}
\label{sec:case_unit_cube}


\begin{figure}
	\vspace{-15pt}
	\begin{minipage}[c]{0.45\textwidth}
		\includegraphics[width=\textwidth]{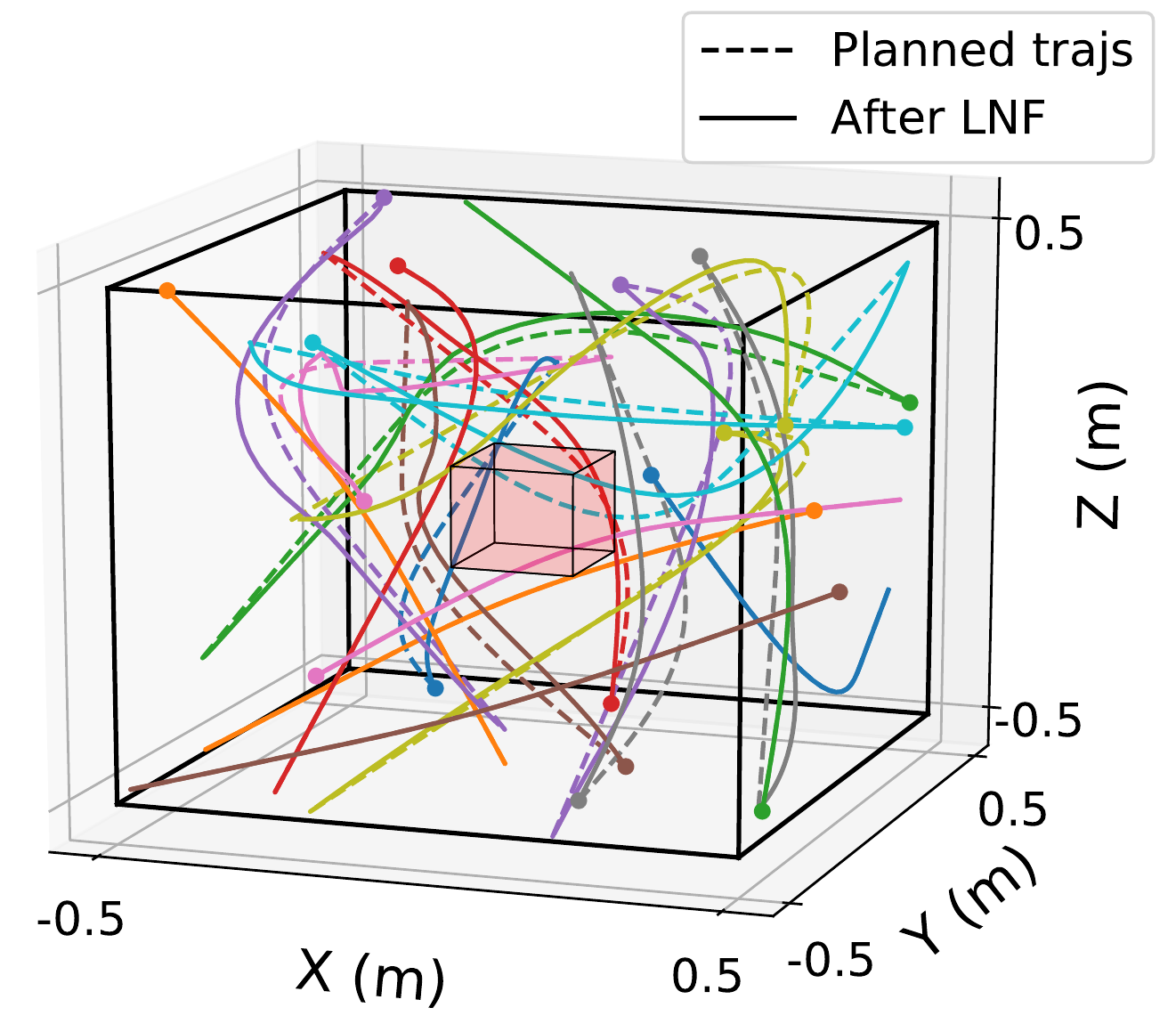}
	\end{minipage}\hfill
	\begin{minipage}[c]{0.55\textwidth}
		\caption{3D representation of the unit cube scenario with 20 UAS. All UAS must reach their goal sets within 4 seconds, avoid the no-fly zone and satisfy pairwise separation requirement of $0.1$m. Initially planned trajectories (dashed lines) had 5 violations of the mutual separation requirement. LNF succesfully resolved all detected violations and led to non-colliding trajectories (solid lines). Simulations
			are available at \url{https://tinyurl.com/unit-cube}.
		} \label{fig:cube_3D}
	\end{minipage}
\vspace{-15pt}
\end{figure}


To verify scalability of LNF, we perform evaluation of the scenario with high-density UAS operations.  
The case study consists of multiple UAS flying within the restricted area of 1m$^3$ while avoiding a no-fly zone of $(0.2)^3$=0.08m$^3$ in the center, see Figure~\ref{fig:cube_3D}.
Such scenario represents a hypothetical constrained and highly populated airspace with heterogeneous UAS missions such as package delivery or aerial surveillance.

Each UAS' $j$ start position $\textit{Start}_j$ and goal set $\textit{Goal}_j$ are chosen at (uniform) random on the opposite random faces of the unit cube.
Goal state should be reached within $4$ second time interval and the no-fly zone must be avoided during this time interval.
Same as in the previous case studies, we first solve the planning problem for each UAS $j$ separately following trajectory generation approach from~\cite{pant2018fly}.
The STL specification for UAS $j$ is captured as follows:
\begin{equation}
\label{eq:cube_spec_d}
\varphi_j =  
\eventually_{[0,4]} (\mathbf{p}_j \in \textit{Goal}_j)
\ \wedge\ 
\always_{[0,4]} \neg (\mathbf{p}_j \in \textit{NoFly})
\end{equation}
After planning is complete and trajectories $\mathbf{p}_j$ are generated, we call LNF on-the-fly to satisfy the overall mission specification $\varphi_{\text{mission}} = \bigwedge_{j=1}^N \varphi_j\ \wedge\ 
\varphi_{\text{separation}}$,
where $N$ is a number of UAS participating in the scenario and 
$\varphi_{\text{mission}}$ is the requirement of the minimum allowed pairwise separation of $0.1$m between the UAS:
\begin{equation}
\varphi_{\text{separation}}= \bigwedge_{j,j':\ j\not=j'} \always_{[0,4]}||\mathbf{p}_j-\mathbf{p}_{j'}||\geq 0.1.
\end{equation}


We increase the density of the scenario by increasing the number of UAS, while keeping the space volume at 1m$^3$.

\textbf{Simulation results.}
We ran 100 simulations for various numbers of UAS, each with randomized start and goal positions. Trajectory pre-planning was done independently for all UAS, and LNF is tasked with \ac{CA}. For evaluation, we measure the overall number of minimum separation requirement violations before and after LNF for two different settings of the fraction $\rho/\delta$:  narrow robustness tube, $\rho/\delta=0.5$ and wider tube, $\rho/\delta=1.15$, see Figure~\ref{fig:cube}. With increasing number of UAS, the number of collisions between initially pre-planned trajectories increase (before LNF) and the number of not collisions by LNF, while small, increases as well (figure~\ref{fig:c2}). The corresponding decay in separation rate over pairs of collisions resolved is faster for the case of $\rho/\delta=0.5$ which is expected due to less room to maneuver. Separation rate is higher when the $\rho/\delta$ ratio is higher, see Figure~\ref{fig:c1}. We performed simulations for up to 70 UAS. Average separation rate for 70 UAS is $0.915$ for $\rho/\delta=0.5$ and $0.987$ for $\rho/\delta=1.15$.  The results show that LNF can still succeed in scenarios with a high UAS density. Videos of the simulations are available at \url{https://tinyurl.com/unit-cube}.


\begin{figure}[tb]
	\begin{subfigure}[b]{0.48\columnwidth}
		\includegraphics[width=\textwidth]{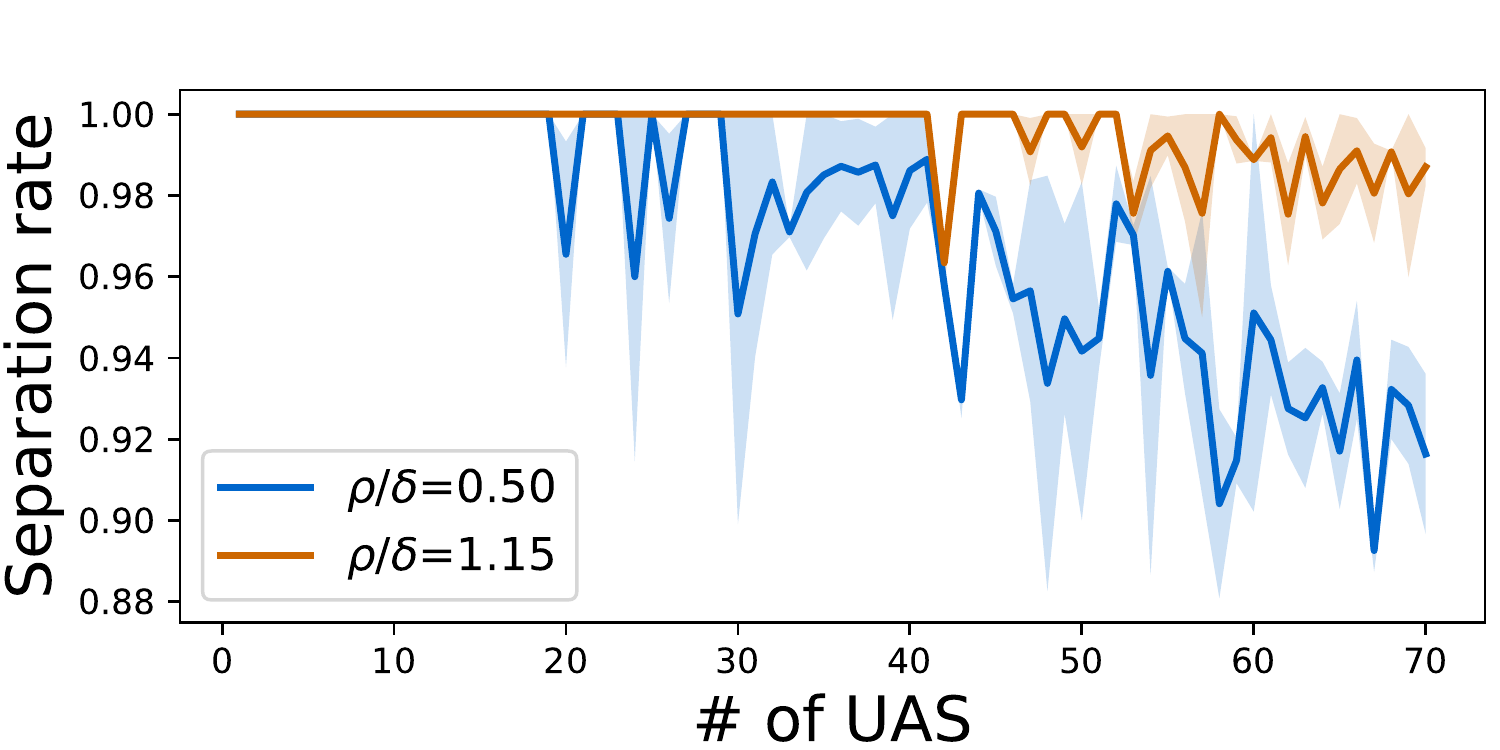}
		\caption{\textit{Separation rate} defines the fraction between the number of initial violations of the minimum separation and the number of resolved violations by LNF.}
		\label{fig:c1}
	\end{subfigure}
\hfill
	\begin{subfigure}[b]{0.48\columnwidth}
		\includegraphics[width=\textwidth]{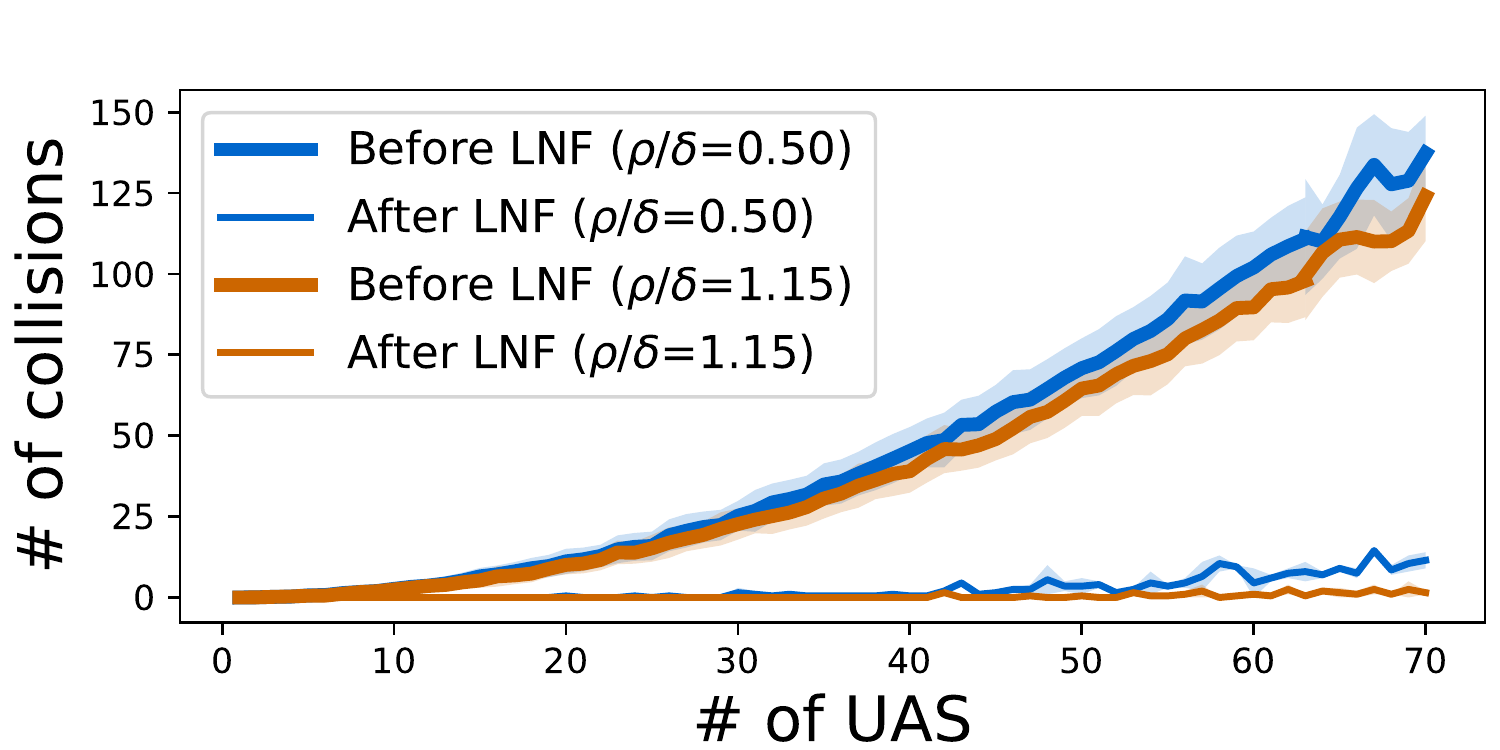}
		\caption{Number of minimum separation violations before and after LNF, averaged over 100 simulations. \\}
		\label{fig:c2}
	\end{subfigure}
	\vspace{-8pt}
	\caption{Unit cube scenario. Model performance analysis with respect to variations in the number of UAS for two different settings of $\rho/\delta$. A higher $\rho/\delta$ implies there is more room within the robustness tubes to maneuver for CA. Performance is measured in terms of separation rate (a) and the overall number of minimum separation requirement violations before and after LNF (b). We plot the mean and standard deviation over 100 iterations.}
	\label{fig:cube}
\end{figure}

\subsubsection{Comparison to MILP-based re-planning}
\label{sec:baseline}

\begin{table}[]
	\vspace{-7pt}
	\renewcommand{\arraystretch}{1.3}
	\setlength{\tabcolsep}{0.3pt}
	\centering
	\begin{tabular}{l|c||c|c|c|c|c}
		& \multicolumn{1}{c||}{Re-planning scheme} & $N=10$ & $N=20$& $N=30$ & $N=40$& $N=50$  \\ \hline\hline
		\multirow{2}{*}{\begin{tabular}[c]{@{}l@{}}Comp. times \\ (mean$\pm$std)\end{tabular}} & MILP-based planner                        
		&$0.6\pm 0.1$s 
		&$8.8\pm 9.6$s             
		&$175.5\pm\! 149.9$s            
		&$1740.\pm 129.3$s  
		& Timeout  \\ \cline{2-7} 
		& CA with LNF                 
		&15.2$\pm$ 5.1ms  &  73.1$\pm$23.5ms   
		&117.3$\pm\!$ 45.6ms   
		&  198.7$\pm$ 73.6ms
		& 211.1$\pm$82.3 ms
	\end{tabular}
	\caption{\textit{Computation times} (mean and standard deviation) demanded by the re-planning scheme (MILP-based re-planning or CA with LNF) averaged over $100$ random runs. Time taken by the MILP-based re-planner to encode the problem is not included in the overall computation time. `Timeout' stands for a timeout after $35$ minutes.}
	\label{tbl:baseline}
	\vspace{-15pt}
\end{table}

LNF requires the new trajectories after \ac{CA} to be be within the robustness tubes of pre-planned trajectories to still satisfy other high-level requirements (problem \ref{prob:deconfliction}). While this might be restrictive, we show that online re-planning is usually not an option in these multi-UAS scenarios. A MILP-based planner, similar in essence to \cite{raman2014model}, was implemented and treated as a baseline to compare against LNF through evaluations on the scenario of Section \ref{sec:case_unit_cube}. Unlike the decentralized LNF, such MILP-planner baseline is centralized as it plans for all the UAS in a single optimization to avoid the $\textit{NoFly}$-zone, reach their destinations and also avoid each other. 


We ran 100 simulations for various numbers of UAS, with each iteration having randomized start and goal positions. Simulations are available at \url{https://tinyurl.com/re-milp}. The computation times are presented in Table~\ref{tbl:baseline}. As the number of UAS increases, it is clear the online re-planning is intractable. For example, the baseline takes on average $8.8$ seconds to produce trajectories for $20$ UAS, in contrast with $73.1$ milliseconds for LNF to perform CA.
For 50 UAS and higher the MILP baseline solver could not return a single feasible solution, while LNF could. 
LNF outperforms the MILP-based re-planning baseline since it can perform \ac{CA} with small computation times, even for a high number of UAS. 


\section{Conclusions}
\label{sec:conclusions}

\textbf{Summary:} 
We presented Learning-to-Fly (L2F), an online, decentralized and mission-aware scheme for pairwise \ac{UAS} Collision Avoidance. Through Learning-And-Flying (LNF) we extended it to work for cases where more than two \ac{UAS} are on collision paths, via a systematic pairwise application of L2F and with a set-shrinking approach to avoid live-lock like situations. These frameworks combine learning-based decision-making and decentralized linear optimization-based Model Predictive Control (MPC) to perform \ac{CA}, and we also developed a fast heuristic to \textit{repair} the decisions made by the learing-based component based on the feasibility of the optimizations. Through extensive simulation, we showed that our approach has a computation time of the order of milliseconds, and can perform \ac{CA} for a wide variety of cases with a high success rate even when the UAS density in the airspace is high. 

\noindent \textbf{Limitations and future work:} 
While our approach works very well in practice, it is not \textit{complete}, i.e. does not guarantee a solution when one exists, as seen in simulation results for L2F. This drawback requires a careful analysis for obtaining the sets of initial conditions over the conflicting UAS such that our method is guaranteed to work. In future work, we aim to leverage tools from formal methods, like falsification, to get a reasonable estimate of the conditions in which our method is guaranteed to work. We will also explore improved heuristics for the set-shrinking in LNF, as well as the \ac{CR}-decision repairing procedure.


%
\bibliographystyle{ACM-Reference-Format}
\bibliography{root_ACM_TCPS}

\appendix

\section{Robustness Tubes Shrinking}
\label{sec:rts_appndx}

\begin{mydef}
	\label{def:dist}
	The distance between two sets
	$A$ and $B$ is defined as:
	\begin{equation}
	\label{eq:dist}
	\mathbf{dist}(A, B) = \inf 
	\left\{
	||a - b||_\infty \ \vert\  a\in A,\ b\in B
	\right\}
	\end{equation}
\end{mydef}

\begin{mydef}
	\label{def:tube_sep}
	Two robustness tubes $\rtube_1$ and $\rtube_2$ are said to be $\delta$-separate from each other
	if at every time step $k$ the distance between them is at least $\delta$, i.e.
	\begin{equation}
	\label{eq:tube_deltasep}
	\mathbf{dist}(P_{1,k}, P_{2,k}) \geq \delta\ \forall k=0,\ldots,H.
	\end{equation}	
	For brevity we use $\mathbf{dist}(\rtube_1, \rtube_2)\geq \delta$ for denoting being $\delta$-separate across all time indices $k=0,\ldots,H$.
\end{mydef}

\begin{proof}[Proof of Theorem~\ref{th:rts_success}]
	By construction of RTS, see Algorithm~\ref{alg:shrinking}. 
	If initial tubes are $\delta$-separate then no shrinking is required and therefore, both properties \eqref{eq:lnf_tube_subset} and \eqref{eq:lnf_tube_dist} are satisfied. 
	If the initial tubes are not $\delta$-separate, property~\eqref{eq:lnf_tube_subset} comes from the fact that for any time step $k$, $P_{j, k}'=P_{j, k} \setminus \varPi_k$ for UAS $j=1,2$. 
	Property~\eqref{eq:lnf_tube_dist} is a consequence of the zero-slack solution and Theorem~\ref{th:CAMPC_success} which states that resulting trajectories are non-conflicting, $||p_{1, k}'-p_{2, k}'||\geq\delta$, $\forall k\in\{0,\ldots,H\}$, therefore, $msep\geq\delta$. Following Algorithm~\ref{alg:shrinking}, for any time step $k$ box's $\varPi_k$ smallest edge is $\min(msep, \delta)=\delta$ and since for both UAS $j=1,2$ the tubes update is defined as
	$P_{j, k}'=P_{j, k} \setminus \varPi_k$, the shrinked tubes $P_{j, k}'$ are $\delta$-separate. 
\end{proof}

\begin{proof}[Proof of Lemma~\ref{lemma:p_in_tube}]
	From the CA-MPC definition~\eqref{eq:campc} it follows that $\mathbf{p}_j'\in\rtube_j$, $\forall j\in\{1,2\}$. The updated tubes are defined as $\rtube_j' = \rtube_j \setminus \boldsymbol\varPi$, see Algorithm~\ref{alg:shrinking}. By the definition of 3D cube $\boldsymbol\varPi$, for any time step $k$, ${p}_{j,k}'\not\in\varPi_k$, therefore, $\mathbf{p}_j' \in \rtube_j',\ \forall j\in\{1,2\}$.
\end{proof}

\begin{proof}[Proof of Lemma~\ref{lemma:tubes}]
	Following the Definition~\ref{def:tube_sep}, tubes are $\delta$-separate if $\mathbf{dist}(P_{1,k}, P_{2,k}) \geq \delta,\ \forall k\in\{0,\ldots,H\}$.
	Therefore, due to~\eqref{eq:dist} the following holds:
	\begin{equation}
	\inf \left\{
	||p_{1,k} - p_{2,k}|| \mid\  p_{1,k}\in P_{1,k},\ p_{2,k}\in P_{2,k}\right\}\geq \delta.
	\end{equation}
	By the definition of the infimum operator, $\forall p_{1,k}\in P_{1,k}, \forall p_{2,k}\in P_{2,k}$:
	\begin{equation}
	||p_{1,k} - p_{2,k}|| \geq 
	\inf \left\{
	||p_{1,k} - p_{2,k}|| \mid\  p_{1,k}\in P_{1,k},\ p_{2,k}\in P_{2,k}\right\}\geq \delta,
	\end{equation}
	which completes the proof.
\end{proof}

\begin{proof}[Proof of Lemma~\ref{lemma:3uas}]
	\begin{enumerate}
		\item Property~\eqref{eq:ms23} directly follows from Theorem~\ref{th:CAMPC_success}.
		\item Due to Theorem~\ref{th:rts_success}, RTS application  \eqref{eq:prop13} leads to tubes $\rtube''_1$ and $\rtube'_3$ being $\delta$-separate. RTS \eqref{eq:prop23} leads to $\rtube''_3\subseteq \rtube'_3$. Therefore, $\rtube''_1$ and $\rtube''_3$ are $\delta$-separate and following Lemma~\ref{lemma:tubes}, property~\eqref{eq:ms13} holds.
		\item Analogously, due to Theorem~\ref{th:rts_success}, RTS application \eqref{eq:prop12} leads to tubes $\rtube'_1$ and $\rtube'_2$ being $\delta$-separate. 
		RTS \eqref{eq:prop13} leads to
		$\rtube''_1\subseteq \rtube'_1$ and RTS \eqref{eq:prop23} leads to
		$\rtube''_2\subseteq\rtube'_2$. Therefore, $\rtube''_1$ and $\rtube''_2$ are $\delta$-separate and following Lemma~\ref{lemma:tubes}, property~\eqref{eq:ms12} holds.
		\item Tube belonging property \eqref{eq:im} follows directly from Lemma~\ref{lemma:p_in_tube}.
	\end{enumerate}
\end{proof}

\section{Links to the videos}

Table~\ref{tbl:sims} has the links for the
visualizations of all simulations and experiments performed in
this work.

\begin{table}[h!]
	\vspace{-10pt}
	\renewcommand{\arraystretch}{1.}
	\setlength{\tabcolsep}{2.5pt}
	\centering
	\begin{tabular}{c|c|c|c|c}
		\textbf{Scenario} & \textbf{Section} & \textbf{Platform} & \textbf{$\#$ of UAS} & \textbf{Link} \\ \hline
		L2F test & Sec.~\ref{sec:experiments_l2f} & Sim. & 2 & \url{https://tinyurl.com/l2f-exmpl} \\ \hline
		Crazyflie validation & Sec.~\ref{sec:experiments_l2f} & CF 2.0 & 2 & \url{https://tinyurl.com/exp-cf2} \\ \hline
		Four UAS position swap
		& Sec.~\ref{sec:case_pos_swap} & Sim. & 4& 
		\url{https://tinyurl.com/swap-pos} \\ \hline
		Four UAS reach-avoid mission & 
		Sec.\ref{sec:case_reach_avoid} & Sim. & 4&
		\url{https://tinyurl.com/reach-av} \\ \hline
		High-density unit cube & 
		Sec.\ref{sec:case_unit_cube} & Sim. & 10, 20, 40 &
		\url{https://tinyurl.com/unit-cube}\\ \hline
		MILP re-planning & Sec~\ref{sec:baseline} & MATLAB & 20 &
		\url{https://tinyurl.com/re-milp}
	\end{tabular}
	\caption{Links for the videos for simulations and experiments. ``Sim.'' stands for Python simulations, ``CF2.0'' for experiments on the Crazyflies.}
	\label{tbl:sims}
\end{table}

\end{document}